 \newtheorem{theorem}{Theorem}[section]
 \newtheorem{lemma}[theorem]{Lemma}
  \newtheorem{claim}[theorem]{Claim}
\theoremstyle{definition}
 \newtheorem{definition}[theorem]{Definition}
\newif\ifqed
\def\GrabProofArgument[#1]{ #1: \egroup\ignorespaces}
\def\proof{\noindent\textbf\bgroup Proof%
	\@ifnextchar[{\GrabProofArgument}{. \egroup\ignorespaces}\global\qedtrue}
\def\qedhere{\ifmmode\tag*{\qedsign}\else\hspace*{\fill}\qedsign\medskip\fi\global\qedfalse}
\def\qedsign{$\Box$}
\newcommand{\aliz}[1]
{\par {\color{blue} Alireza: #1 \par}}
\definecolor{mygreen}{RGB}{20,140,80}
\definecolor{mylightgray}{RGB}{230,230,230}
\definecolor{mygreen}{RGB}{20,140,80}
\definecolor{mydarkgray}{gray}{0.15} 
\definecolor{oceanblue}{HTML}{2c55c2}
\newcommand{\mcal}[1]{\ensuremath{\mathcal {#1}}}
\newcommand{\algA}{{\ensuremath{\mcal{A}}}\xspace}
\newcommand{\poly}{\ensuremath{{{\sf poly}}}\xspace}
\newcommand{\Alg}{\ensuremath{{\sf Alg}}\xspace}
\newcommand{\tr}{\ensuremath{{\sf tr}}\xspace}
\newcommand{\prefix}{\ensuremath{{\sf pre}}\xspace}
\newcommand{\Expt}{\ensuremath{{\sf Expt}}\xspace}
\newcommand{\Geom}{\ensuremath{{\sf Geom}}\xspace}
\newcommand{\ds}{\ensuremath{d}\xspace}
\newcommand{\err}{\ensuremath{{\sf err}}\xspace}
\newcommand{\dgr}{\ensuremath{{\sf deg}}\xspace}
\renewcommand{\deg}{\ensuremath{{\sf deg}}\xspace}
\newcommand{\PSum}{\ensuremath{{\sf PSum}}\xspace}
\newcommand{\Cnt}{\ensuremath{{\sf Cnt}}\xspace}
\newcommand{\thresh}{\ensuremath{{\sf T}}\xspace}
\newcommand{\noise}{\ensuremath{{\mcal{E}}}\xspace}
\newcommand{\noisep}{\ensuremath{{\mcal{N}}}\xspace}
\newcommand{\lap}{Lap}
\newcommand{\opt}{OPT}
\newcommand{\idx}{\ensuremath{\mathit{idx}}\xspace}
\newcommand{\elaine}[1]{{\footnotesize\color{magenta}[Elaine: #1]}}
\renewcommand{\elaine}[1]{}
\newtheorem{fact}{Fact}
\newcommand*\samethanks[1][\value{footnote}]{\footnotemark[#1]}
\newcounter{proccnt}
\newcommand{\konote}[1]{}
\title{Differentially Private Densest Subgraph}
\author{
	Alireza Farhadi\thanks{University of Maryland. Email: \texttt{\{farhadi,hajiagha\}@cs.umd.edu}.}
	\and
	MohammadTaghi Hajiaghayi\samethanks[1]
	\and
	Elaine Shi\thanks{Carnegie Mellon University. Email: \texttt{runting@cs.cmu.edu}.} \thanks{Supported by a grant from ONR, a gift from Cisco, and NSF awards under grant numbers 2128519 and 2044679, and a Packard Fellowship.}
}
\begin{document}
	\newcommand{\ignore}[1]{}
\renewcommand{\theenumi}{(\roman{enumi})}
\renewcommand{\labelenumi}{\theenumi.}
\sloppy

%
%

\date{}

\maketitle


\begin{abstract}
Given a graph, the densest subgraph problem
asks for a set of vertices such that the average degree 
among these vertices is maximized.
Densest subgraph has numerous applications in learning, e.g., community detection
in social networks, link spam detection, correlation mining, bioinformatics, 
and so on. 
Although there are efficient algorithms that output either
exact or approximate solutions  
to the densest subgraph problem,
existing algorithms may violate the privacy of the individuals in the network,
e.g., leaking the existence/non-existence of edges.

In this paper, we study the densest subgraph problem in the framework of the differential privacy, and we derive upper and lower bounds for this problem. We show that there exists a linear-time $\epsilon$-differentially private algorithm that finds a 2-approximation of the densest subgraph with an extra poly-logarithmic additive error. 
Our algorithm not only reports the approximate density of the densest subgraph, but also reports the vertices that form the dense subgraph.

Our upper bound almost matches the famous 
$2$-approximation by Charikar both in performance  
and in approximation ratio, but we additionally achieve differential privacy. 
In comparison with Charikar's algorithm, our algorithm has an extra
poly-logarithmic additive error. 
We partly justify the additive error 
with a new lower bound, 
showing that  
for any differentially private algorithm 
that provides a constant-factor approximation, a  
sub-logarithmic additive error is inherent. 

We also practically study our differentially private algorithm on real-world graphs, and we show that in practice the algorithm finds a solution which is very close to the optimal.


\end{abstract}

\section{Introduction}
\label{sec:intro}

The {\it densest subgraph problem} (DSP)~\cite{goldbergdensestsubgraph} 
is a fundamental tool to many graph mining applications. 
Given an undirected graph $G=(V,E)$, the {\it density} of an induced subgraph $S \subseteq V$ 
is defined as $\ds_G(S)= |E(S)|/ |S|$, where $E(S)$ is the set of all edges in the subgraph induced by the vertices $S \subseteq V$. In the densest subgraph problem, the goal is to find a subset of vertices $S \subseteq V$ with the highest density $\ds_G(S)$.
The densest subgraph problem (DSP) is used as a crucial tool for community detection in social network graphs. This problem also has notable applications in learning including 
link spam detection~\cite{dspspam}, 
correlation mining~\cite{dspcorrelation}, story identification~\cite{dspstory} and bioinformatics~\cite{dspbio}
-- we refer the reader to the tutorial by Gionis and Tsourakakis~\cite{dsdtutorial} 
for more applications of DSP. 
Due to its importance, the DSP problem has been studied extensively 
in the literature~\cite{goldbergdensestsubgraph,charikardensestsubgraph,khullerdensestsubgraph,dsp00,dsp01,dsp02}: \elaine{cite more}
it is long known that 
efficient, polynomial-time algorithms exist 
for finding the exact solution of DSP~\cite{goldbergdensestsubgraph,charikardensestsubgraph,khullerdensestsubgraph}.

In many applications of DSP, however, the underlying graph is privacy sensitive (e.g.,
social network graphs).  
Therefore, one might be concerned that the result output by the DSP  
algorithm might breach the privacy  
privacy of the individuals in the network, e.g., disclose the (non)-existence
of friendship
between pairs of individuals.
In this paper, we explore how to 
perform community detection on sensitive graphs, while protecting
individuals' privacy.
To this end, we ask the following question,  

\begin{tcolorbox} 
{\it Can we construct a differentially private
algorithm that computes a good approximation of the densest subgraph of a given
graph $G$?}
\end{tcolorbox}

We first help the reader recall the notion of differential 
privacy~\cite{calibrating} in a graph context.
Let $G$ and $G'$ be two graphs that are identical expect the existence/non-existence of a single edge. Informally, the differential privacy requires that the outputs of the (randomized) 
algorithm on $G$ and $G'$ are close in distribution.
In this way, 
the output of the algorithm does not reveal meaningful information about the existence of an edge in the graph. Henceforth, 
we say that two undirected graphs $G$ and $G'$ are \textit{neighboring} if they differ 
in only one edge.
More formally, differential privacy (DP) is defined as follows~\cite{calibrating}.

\begin{definition}[$(\epsilon, \delta)$-Differential Privacy (DP)]
Let $\epsilon > 0$ and $\delta \in [0, 1]$.
We say that a randomized algorithm  
\Alg achieves $(\epsilon, \delta)$-differential privacy or $(\epsilon, \delta)$-DP
for short, 
if for any two {\it neighboring}
graphs $G$ and $G'$, for any subset $U$ of the output space, 
\[
\Pr[\Alg(G) \in U] \leq 
e^\epsilon \cdot \Pr[\Alg(G') \in U]  + \delta
\]
Whenever $\delta = 0$, we also say that the algorithm
satisfies $\epsilon$-DP.
\end{definition} 

For the densest subgraph problem, 
we assume that the output contains 
1) a dense subset of vertices $S \subseteq V$; and 2)
an estimate of the density of $S$.
Had we required the algorithm to output \textit{only} an estimate
of $\max_{S \subseteq V}\ds_G(S)$, i.e., an estimated density of the densest
subgraph, 
then it would have been easy to devise a DP algorithm:  
observe that the quantity $\max_{S \subseteq V}\ds_G(S)$
has small global sensitivity, that is, if we flip
the existence of a single edge in 
$G$, the quantity $\max_{S \subseteq V}\ds_G(S)$ 
changes by at most $1$.  As a result, we can just use
the standard Laplacian mechanism~\cite{calibrating} to output 
a DP estimate with good accuracy.
However, 
we stress that most interesting applications would also
want to know the dense community $S$ --- simply knowing
an estimate of its density would not be too useful. 

The requirement to also report a dense vertex set 
$S \subseteq V$ makes it much more challenging to devise a DP algorithm.
In our case, none of the off-the-shelf DP mechanisms would 
directly work to the best of our knowledge.
First, observe that the output is high-dimensional, and 
has high global sensitivity as we explain in 
Appendix~\ref{sec:naive}. 
Therefore, the standard Laplacian mechanism~\cite{calibrating,dpbook-salil,dpbook-dr} 
(also called output perturbation)
completely fails. 
Another na\"ive approach is randomized response~\cite{randomizedresponse,dpbook-salil,dpbook-dr} 
(also called input perturbation),
i.e.,
adding some noise to obfuscate the existence of each edge.
Unfortunately, 
as we argue in Appendix~\ref{sec:naive}, the randomized
response approach gives poor utility. 
Finally, the exponential mechanism~\cite{expmech,dpbook-salil,dpbook-dr} 
also fails --- not only is it not polynomial-time, the standard
analysis gives an error bound  
as large as $O(n)$ which makes the result meaningless.

\ignore{
 We study the densest subgraph problem under this notion of privacy, and we derive an upper bound and a lower bound for the accuracy of an algorithm that finds DSP while preserving the differential privacy. Considering any two neighboring graphs, it is easy to see that the density of the densest subgraph is different for these two graphs by at most $1$. Denoting by $\Delta$, we define the \textit{global sensitivity} of DSP to be the maximum difference between to neighboring graphs. Therefore, the global sensitivity of DSP is at most 1. It is known that for a function $f$ with the global sensitivity of $\Delta_f$, we can make the output of the function $\epsilon$-DP by adding a noise sampled from $\lap(\Delta_f/\epsilon)$. Since the global sensitivity of DSP is at most 1, we can directly imply that we can report the \textit{density} of DSP by having $O(1/\epsilon)$ additive error. 
 
Although the \textit{density} of DSP has a low sensitivity, the vertices that form the densest subgraph can change significantly if we add or remove one edge. For example, consider a graph $G$ with $n$ vertices which is the union of two disjoint cliques $G_1, G_2$ of size $n/2$. \aliz{figure here?} A deterministic for DSP algorithm should report one of these two cliques as the densest subgraph. Let assume the algorithm reports $G_1$ as the densest subgraph. If we remove any edge from $G_1$, the unique densest subgraph of the new graph is $G_2$. Therefore, all vertices in the DSP can change by adding or removing only one edge. Roughly speaking, this example shows that the \textit{vertex set} of DSP has a high sensitivity w.r.t. adding or removing edges. As a consequence, reporting the vertex set of DSP is more challenging in the framework of differential privacy.

In this paper, we consider the problem of reporting the \textit{vertex set} of DSP in addition to its density, and we give the first upper-bound and lower-bound for this problem. 
}

\subsection{Our Results and Contributions}
We present new upper- and lower-bounds for the differentially private,
densest subgraph problem. 
First, 
we give a linear-time DP approximation algorithm for 
the densest subgraph problem. 
The runtime and accuracy of our algorithm are roughly competitive to  
the state-of-the-art {\it non-private} approximation algorithm
by Charikar~\cite{charikardensestsubgraph}. 
Specifically, let $n$ denote the number of vertices. 
Our algorithm is \textit{linear-time}, and achieves 
$\epsilon$-DP and 
$(2, O(\frac{1}{\epsilon} \log^{2.5}n))$-approximation --- here
we have two approximation parameters: the first parameter $2$
is the {\it multiplicative} approximation ratio, and the second parameter
$O(\frac{1}{\epsilon} \log^{2.5}n)$
is an {\it additive} error.
In comparison,  
Charikar's famous (non-private) linear-time algorithm 
achieves $(2, 0)$-approximation where the additive error is $0$.
We justify the extra additive error
with a new lower bound, showing that to achieve
any constant-multiplicative approximation, 
some sub-logarithmic additive error is unavoidable.
Our upper- and lower-bound
results are stated in the following theorems:

\begin{theorem}[DP approximation of densest subgraph]
Given a graph $G$, and parameters $\epsilon>0$ and $\sigma \in [0,1]$, there exists a linear-time $\epsilon$-DP algorithm that succeeds with the probability of $1-\sigma$ and 
outputs $S \subseteq V$ and an estimate $\ds^*$ such that
$$ \opt/2 - O(\frac{1}{\epsilon} \cdot \log^{2.5}n \cdot \log\frac{1}{\sigma}) \le \ds_G(S) \le \opt, \ \ \text{ and }  \ \
|\ds^* - \ds_G(S)| \leq O(\frac{1}{\epsilon} \cdot \log^{2.5}n \cdot \log\frac{1}{\sigma})
$$
where $\opt$ is the true density of the densest subgraph.
\end{theorem}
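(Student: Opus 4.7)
The plan is to design a differentially private analogue of Charikar's classical peeling algorithm and adapt its analysis to account for the injected noise. Recall that Charikar's algorithm iteratively removes the minimum-degree vertex, records the induced subgraph after each removal, and outputs the recorded subgraph of largest density; this is known to give a $2$-approximation. The main obstacle of a direct noisy implementation is that it would run for $n$ rounds, and naive sequential composition would blow up the per-round noise budget by a factor of $\sqrt{n}$. To avoid this, I will use a \emph{thresholded}, batched variant of peeling that terminates in only $O(\log n)$ rounds, so that composition costs only polylogarithmic noise.

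\textbf{Algorithm.} Start with $V_0 = V$. At round $i$, compute each vertex's degree in $G[V_i]$, perturb each by an independent $\lap(O(\log n / \epsilon))$ noise, and peel every vertex whose noisy degree lies within a $(1+\eta)$ factor of the current noisy minimum. Simultaneously publish a Laplace-noised estimate of $|E(V_i)|/|V_i|$. After at most $O(\log n)$ rounds, the graph is empty; output the $V_i$ whose noisy density is largest, together with that estimate.

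\textbf{Privacy.} A single edge change $(u,v)$ changes only the degrees of $u$ and $v$ by $\pm 1$, and changes $|E(V_i)|$ by $\pm 1$; hence each round releases functions of $\ell_\infty$-sensitivity $O(1)$. With $O(\log n)$ rounds, advanced composition yields $\epsilon$-DP using per-round noise of scale $\widetilde{O}(\log n / \epsilon)$, and a union bound over the $n$ vertices per round plus the $\log(1/\sigma)$ tail contributes the remaining factors, giving a total additive error of $O(\tfrac{1}{\epsilon}\log^{2.5} n \cdot \log(1/\sigma))$.

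\textbf{Utility.} I will adapt Charikar's argument. Let $S^*$ be any optimal densest subgraph with density $\opt$; every vertex of $S^*$ has degree at least $\opt$ inside $G[S^*]$. Let $i^*$ be the first round in which some vertex of $S^*$ is peeled. The noisy minimum in round $i^*$ is at most $\opt$ plus the per-round noise bound $\tau = O(\tfrac{1}{\epsilon}\log^{2.5} n \cdot \log(1/\sigma))$, so every surviving vertex of $S^* \cap V_{i^*}$ has true degree $\ge \opt - \tau$ in $G[V_{i^*}]$. Averaging and handshaking then give $\ds_G(V_{i^*}) \ge \opt/2 - \tau$. Since the noisy density estimate of $V_{i^*}$ is within $\tau$ of $\ds_G(V_{i^*})$, the noisy argmax returns a candidate whose density is at most $2\tau$ below $\ds_G(V_{i^*})$, proving both claims up to the stated $O(\tau)$ additive slack. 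The linear runtime follows by charging each edge $O(\log n)$ to a single round using standard bucketed degree bookkeeping.

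\textbf{Main obstacle.} The delicate step is the utility analysis of the batched peeling: because we remove a whole $(1+\eta)$-layer of low-degree vertices at once, we must argue that not too much of $S^*$ is wiped out in a single round. The argument hinges on choosing $\eta$ small enough to keep the multiplicative approximation at $2$ (rather than $2(1+\eta)$) after absorbing the effect into the additive term, while simultaneously choosing $\eta$ large enough that the degree gap induced by noise is tolerated. Balancing these two competing constraints, together with tracking a union bound over $O(n \log n)$ noise samples and the $O(\log n)$-round composition factor, is what produces the $\log^{2.5} n$ exponent in the final bound.
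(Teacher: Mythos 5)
Your plan is a genuinely different route (batched peeling with fresh per-round noise and composition, instead of the paper's per-vertex tree-aggregation / prefix-sum counters with a sparse-vector-style trigger), but it has two gaps that I do not think can be repaired without essentially abandoning the batching.

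\textbf{Pure $\epsilon$-DP versus $(\epsilon,\delta)$-DP.} You invoke the advanced composition theorem to share the privacy budget across the $O(\log n)$ rounds. Advanced composition only yields $(\epsilon,\delta)$-DP with $\delta>0$, but the theorem promises pure $\epsilon$-DP. You could fall back to basic composition with per-round budget $\epsilon/O(\log n)$, which preserves pure DP and still gives a polylogarithmic additive term, but then the stated $\log^{2.5}n$ exponent is not what your calculation produces (you would get roughly $\log^2 n$ from the per-round Laplace scale times the union bound over $n\log n$ noise draws). The paper avoids composition over rounds entirely: each vertex runs one $\epsilon_1$-DP binary-tree prefix-sum counter (error $O(\frac{1}{\epsilon}\log^{1.5}n\log\frac1\sigma)$ per query), and a sparse-vector-style thresholded ``outstanding counter'' decides \emph{when} to feed an increment into that counter, so that a single edge change perturbs only one counter value and one threshold comparison stream; the $\log^{2.5}$ is the prefix-sum $\log^{1.5}$ times a $\log n$ from the union bound.

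\textbf{The $(1+\eta)$ loss does not fold into the additive term.} Batched peeling (removing everything within a $(1+\eta)$ factor of the minimum) is known to give a $2(1+\eta)$-multiplicative approximation in $O(\log_{1+\eta}n)$ rounds. Writing $\frac{\opt}{2(1+\eta)} = \frac{\opt}{2} - \Theta(\eta\cdot\opt)$, to hide this inside an $O(\frac{1}{\epsilon}\,\mathrm{polylog}\,n)$ additive slack you need $\eta = O(\mathrm{polylog}(n)/\opt)$. Since $\opt$ can be $\Theta(n)$ (a clique), this forces $\eta = O(\mathrm{polylog}(n)/n)$, and hence $O(\log_{1+\eta}n) = \Omega(n/\mathrm{polylog}\,n)$ rounds--not $O(\log n)$. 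Your own utility calculation shows this: at round $i^*$ the degree of the first peeled vertex of $S^*$ is only bounded by $(1+\eta)$ times the (noisy) minimum, so the lower bound you get on $\ds_G(V_{i^*})$ is $\frac{\opt}{2(1+\eta)} - O(\tau)$, not $\frac{\opt}{2} - O(\tau)$. This is precisely why the paper sticks with one-vertex-at-a-time peeling for all $n$ rounds and instead spends its effort on making that private (via per-vertex prefix-sum counters and sparse-vector thresholds) and linear-time (via the bucketing trick, which introduces only an extra polylogarithmic \emph{additive} discretization error rather than a multiplicative $(1+\eta)$ factor).
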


\begin{theorem}[Lower bound on additive error for DP densest subgraph]
\label{thm:lb}
Let $\alpha > 1$, $\epsilon > 0$ be arbitrary constants,
$\exp(-n^{0.49}) < \sigma < 0.000001 \cdot \min(1, \epsilon, \exp(-\epsilon))$, and $0 \le \delta \le \frac{\sigma\epsilon}{\log \frac{1}{4\sigma}}$.
Then, there exists a sufficiently small 
$\beta = \Theta\big(\frac{1}{\alpha}\sqrt{\frac{1}{\epsilon}\log \frac{1}{\sigma}}\big)$ 
such that 
there does not exist an $(\epsilon,\delta)$-DP mechanism that 
achieves $(\alpha, \beta)$-approximation 
with $1-\sigma$ probability.
\end{theorem}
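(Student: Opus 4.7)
The plan is a two-instance group-privacy packing lower bound: I will exhibit two graphs $G_0,G_1$ whose true densest-subgraph values are so far apart that any $(\alpha,\beta)$-approximation achieving success probability $1-\sigma$ must separate them, yet whose Hamming distance is so small that differential privacy forbids separation unless $\beta$ is at least of the claimed order $\Omega\bigl(\tfrac{1}{\alpha}\sqrt{(1/\epsilon)\log(1/\sigma)}\bigr)$.

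First I would fix the construction. Let $G_0$ be the empty graph on $n$ vertices, and let $G_1=G_0\cup K_T$ where $K_T$ is a clique on an arbitrary fixed subset $T\subseteq V$ of size $|T|=k$, for an integer $k=\Theta(\alpha\beta)$ chosen just large enough that $(k-1)/(2\alpha)>3\beta$. Then $\opt(G_0)=0$ and $\opt(G_1)=(k-1)/2$. Any $(\alpha,\beta)$-approximation returns $(S,\ds^*)$ with $\opt/\alpha-\beta\le \ds_G(S)\le\opt$ and $|\ds^*-\ds_G(S)|\le\beta$, so on $G_0$ it must hold that $\ds^*\le\beta$, while on $G_1$ it must hold that $\ds^*\ge(k-1)/(2\alpha)-2\beta>\beta$, each event occurring with probability at least $1-\sigma$. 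Setting $U=\{\ds^*\le\beta\}$ as the distinguisher gives $\Pr[\Alg(G_0)\in U]\ge 1-\sigma$ and $\Pr[\Alg(G_1)\in U]\le\sigma$. The hypothesis $\sigma>\exp(-n^{0.49})$ is used precisely to ensure that the required $k=\Theta(\alpha\beta)=\Theta(\sqrt{(1/\epsilon)\log(1/\sigma)})$ is much smaller than $n$, so the planted clique fits inside~$V$.

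Next I would apply group privacy to the pair $G_0,G_1$, which differ in exactly $D=\binom{k}{2}$ edges. The standard chained $(\epsilon,\delta)$-DP bound yields
\begin{equation*}
1-\sigma \;\le\; \Pr[\Alg(G_0)\in U] \;\le\; e^{D\epsilon}\Pr[\Alg(G_1)\in U] + \frac{e^{D\epsilon}-1}{e^\epsilon-1}\,\delta \;\le\; e^{D\epsilon}\sigma + \frac{e^{D\epsilon}-1}{e^\epsilon-1}\,\delta.
\end{equation*}
The hypotheses $\delta\le\sigma\epsilon/\log(1/(4\sigma))$ and $\sigma<0.000001\cdot\min(1,\epsilon,e^{-\epsilon})$ are exactly what is needed so that, at the critical scale $D\approx(1/\epsilon)\log(1/\sigma)$, the additive $\delta$-term is at most a constant fraction of $1-\sigma$ and can be absorbed; the inequality then reduces to $e^{D\epsilon}\ge\Omega(1/\sigma)$, hence $D\ge c'\cdot(1/\epsilon)\log(1/\sigma)$ for an absolute constant $c'>0$. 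Substituting $D=\binom{k}{2}=\Theta(\alpha^2\beta^2)$ yields $\alpha^2\beta^2\epsilon\ge\Omega(\log(1/\sigma))$, i.e. $\beta\ge c\cdot\tfrac{1}{\alpha}\sqrt{(1/\epsilon)\log(1/\sigma)}$. Choosing the hidden constant in the theorem's $\Theta(\cdot)$ strictly below $c$ produces the required contradiction.

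The main obstacle is the $(\epsilon,\delta)$-bookkeeping. The delta shift $\frac{e^{D\epsilon}-1}{e^\epsilon-1}\delta$ grows exponentially in~$D$, so the absorption step above cannot be done asymptotically: one must verify, using the precise numerical constraints $\delta\le\sigma\epsilon/\log(1/(4\sigma))$ and $\sigma<0.000001\cdot\min(1,\epsilon,e^{-\epsilon})$, that at $D$ on the order of $(1/\epsilon)\log(1/\sigma)$ the delta-term is bounded by, say, $\tfrac12(1-2\sigma)$, so that $e^{D\epsilon}\sigma$ alone cannot close the gap to $1-\sigma$. Once this is done, the remaining pieces --- the combinatorial choice of $k$, the $\{\ds^*\le\beta\}$ distinguisher, and the plug-in $D=\Theta(k^2)$ --- are routine.
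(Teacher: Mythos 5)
Your proof is correct, and it takes a genuinely different route from the paper. The paper also uses a two-instance group-privacy packing with a planted $\Theta(\alpha\beta)$-clique and the same quadratic Hamming budget $D=\Theta(\alpha^2\beta^2)$, so at that level the arguments coincide. The difference is in the distinguishing event. The paper never looks at the reported estimate $d^*$: it shows that on the one-clique graph $G$, the approximation guarantee on the \emph{vertex set} $S^*$ forces $|S^*|\le O(\alpha^2\beta)$, then runs an averaging argument over disjoint vertex blocks to exhibit a block $B$ that, under $G$, is missed by $S^*$ with probability $\ge 1/2$; the distinguisher is the event $\{S^*\cap B=\emptyset\}$, which on the shifted graph $G'$ (clique moved to $B$) forces $d(S^*)=0$ and violates the set-density guarantee. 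You instead threshold the reported estimate $d^*$ at $\beta$: on the empty graph $|d^*-d(S^*)|\le\beta$ forces $d^*\le\beta$, and on the clique graph the combination of $d(S^*)\ge\opt/\alpha-\beta$ with $|d^*-d(S^*)|\le\beta$ forces $d^*>\beta$. This kills the averaging argument entirely and is noticeably shorter, and it removes any dependence on $n$ beyond needing the planted clique to fit; the tradeoff is that it leans on \emph{both} clauses of the approximation definition (set density and estimate accuracy), whereas the paper's argument uses only the set-density clause and therefore rules out even mechanisms that output a vertex set but no estimate. Your use of $\exp(-n^{0.49})<\sigma$ to keep $k\ll n$, the constraint on $\delta$ to absorb the group-privacy additive term, and the constraint on $\sigma$ to close the gap are all deployed in the same places as in the paper's proof; you flag but do not carry out the final constant-level verification, which is fine for a plan and is indeed routine once $D\epsilon\approx c_0^2\log(1/(4\sigma))$ is fixed with a small enough $c_0$.
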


Note that our upper bound achieves $\epsilon$-DP, and our lower 
bound works even for $(\epsilon, \delta)$-DP. This makes both our 
upper- and lower-bounds stronger. 
The proof of Theorem \ref{thm:lb} is available in Appendix \ref{sec:lb}. Finally, we conclude the paper in Section \ref{sec:exp} by demonstrating the performance of our algorithm on real-world datasets. We show that in practice, our algorithm achieves a very accurate solution, even for small choices of the privacy parameter $\epsilon$. 

\subsection{Technical Highlight}
\label{sec:roadmap}
To see the intuition behind our final algorithm, it helps
to break it down into several intermediate steps, to see how the
various techniques are eventually woven together.

\paragraph{Background on Charikar's famous algorithm.}
Our algorithm is inspired by a work of Charikar~\cite{charikardensestsubgraph}. Charikar~\cite{charikardensestsubgraph} 
shows that a simple greedy algorithm can achieve a multiplicative 
approximation ratio of $2$ for DSP. The greedy algorithm is as follows. Let $G=(V,E)$ be an undirected graph. Initially, let $S :=V$, i.e., $S$ is initalized to 
the set of all vertices. At each iteration, the algorithm finds a vertex $v_{\min} \in S$ with the minimum degree in the graph induced by the vertices of $S$, and removes $v_{\min}$ from $S$. Consider an algorithm that repeats the aforementioned procedure until the set $S$ becomes empty. From all of the sets $S$ encountered during the execution of the algorithm, 
the algorithm returns the one with the highest density. Charikar proved  
that this simple greedy algorithm achieves an approximation ratio of $2$.

\paragraph{Warmup idea: a quadratic-time DP algorithm.}
Our first idea is as follows.
In Charikar's algorithm, in each iteration, 
all residual vertices $v \in S$ examine their degree
within the subgraph induced by $S$ --- henceforth 
we call the 
the degree of $v$ in the subgraph induced by $S$ 
the {\it residual degree} of $v$.
Charikar's algorithm picks the $v$ with the minimum 
residual degree
and removes it from $S$. 
Our idea is to replace the residual degree with a noisy, DP counterpart. 
Unfortunately, na\"ively adding independent noise 
to the true residual degree in each of the $n$ iterations 
would result in an $n$-fold loss in error  
given a fixed privacy budget $\epsilon$ (and the
loss can be reduced to $\sqrt{n}$ if we  
allowed $(\epsilon, \delta)$-DP rather than $\epsilon$-DP
and used the advanced composition theorem~\cite{boostingdp,dpbook-salil,dpbook-dr}).


Our idea is to rely on the elegant DP prefix sum mechanism
by Dwork et al.~\cite{continual} and Chan, Shi, and Song~\cite{icalp10,tissec11}. 
Specifically, we can think of the problem as follows.
\begin{itemize}[leftmargin=5mm]
\item
Initially, every vertex computes its noisy total degree
using the standard Laplacian mechanism. Although there are $n$ vertices,
we only need to add noise of constant average magnitude by using parallel
composition.

\item 
Next, every vertex $u$ still remaining in $S$ maintains a noisy counter 
to keep track of 
roughly how many of its direct neighbors have departed (i.e., 
have been removed from $S$).
If we subtract this value from the vertice's noisy total degree,
we get an estimate of its residual 
degree in the subgraph induced by $S$.
\end{itemize}
Therefore, 
the problem boils down to how to have every residual vertex
$v$ maintain a noisy counter of how many of its neighbors have departed. 
Imagine that every time a neighbor of $v$ departs, a value of $1$
is accumulated to $v$'s counter; and 
every time a non-neighbor of $u$ departs, 
$0$ is accumulated to $v$'s counter.
Using the 
elegant DP prefix sum mechanism
by Dwork et al.~\cite{continual} and Chan et al.~\cite{icalp10,tissec11},
we can report $v$'s noisy counter value
at any time step, incurring 
only $O(\frac{1}{\epsilon} \cdot \poly\log n)$ 
error with all but negligible probability.
The noisy counter values and the vertices' noisy degrees are then used 
to determine which vertex is to depart next.
Further, although it seems like there are $n$ counters, using
parallel composition, we need not incur extra loss 
in the privacy budget
due to the $n$ counters.

By extending Charikar's proof (which we omit in this short roadmap), 
we can prove
that this warmup algorithm
achieves the desired $(2, \frac{1}{\epsilon} \cdot \poly\log n)$-approximation.
In particular, 
the error of the prefix sum 
mechanism directly contributes to the additive error term.
Unfortunately, the warmup algorithm incurs 
$\Theta(n^2)$ runtime, since we need to update $O(n)$ noisy counters
in each of the $n$ iterations.

\paragraph{Making it quasilinear time.}
Our final goal is to get an $O(m + n)$-time algorithm where $m$ denotes
the number of edges and $n$ denotes the number of vertices.
However, as an important stepping
stone, let us first consider how to make it
{\it quasilinear} time in $m + n$.
The key observation is that when the graph is sparse, 
updates to the vertices' noisy counters
(realized by the prefix sum mechanisms)
are also sparse. 
Most of the $n^2$ updates come with an input $0$, and only $m$ of them  
come with an input of $1$.
Our idea is therefore to avoid triggering
updates when the input is $0$. 

While the intuition seems simple, realizing this idea differentially privately 
is actually tricky since 
we need to avoid consuming too much privacy budget. 
At a very high level, 
each vertex $v \in S$ keeps track of a noisy outstanding counter ${\sf Cnt}(v)$
of the number of its neighbors 
that departed recently, but 
have not been accumulated into the prefix sum mechanism yet.
When a vertex $u$ gets removed from $S$, 
it informs its neighboring vertices to 
update their noisy outstanding counters.
At this moment, each  vertex $v$ also checks
if its noisy outstanding 
counter ${\sf Cnt}(v)$
has exceeded some predetermined polylogarithmic noisy  
threshold --- if so, it accumulates  
the current outstanding counter 
into its prefix sum mechanism, and resets 
${\sf Cnt}(v)$ to $0$.

The key technical challenge here is that we would be invoking with high probability the total of $O(m)$ updates
 to the vertices' 
noisy outstanding counters, but we cannot afford an $O(m)$-fold loss
in the privacy budget (or equivalently, an $O(m)$-fold loss in error 
when the privacy budget is fixed).
To resolve this problem, 
our idea is {\it in spirit}
reminiscent of the {\it sparse-vector technique}~\cite{complexitydp,medianmech,dpmw}.
We show that we can reduce the privacy analysis of our algorithm to the standard 
sparse-vector technique.
See the subsequent  ``{\bf proof techniques}''
paragraph regarding the technical challenges in the analysis and proof.

\ignore{
The idea is that we do not want to invoke
updates to each vertex's noisy counter (realized by the prefix sum mechanism)
too often.
Updates to a vertex's prefix sum mechanism should only be triggered
when a non-zero 
Our idea is inspired by the sparse-vector technique~\cite{}, \elaine{FILL} 
but our actual instantiation and analysis differ from the standard
description of the sparse-vector technique.

While the intuition seems simple, to make the algorithm and analysis work
is non-trivial. The key challenge here 
is that each vertex $v$ needs to 
compute its outstanding counter
value ${\sf Cnt}(v)$ for up to $n$ steps.
}

\paragraph{Final touches: making it linear time.}
The above algorithm 
can be implemented in $O(m + n) \log n$ time 
if we use a Fibonacci heap to store the residual vertices 
based on their residual degree (i.e., degree in the graph induced by $S$).
To make the algorithm linear time, 
we discretize 
vertices residual degree into polylogarithmically sized regions,
and place each 
vertice in a corresponding bucket based on its residual degree.

Using an idea inspired 
by Charikar~\cite{charikardensestsubgraph}, 
one can show that 
if a vertex is removed from the $k$-th bucket 
in the current iteration, then, in the next iteration, we only need
to sequentially look at the $(k-1)$-th, $k$-th, and $(k+1)$-th, ... buckets.
Moreover, within each bucket, all vertices are treated as having
roughly the same degree, and 
we do not further differentiate them in picking the next vertex to remove from $S$.
Using appropriate data structures
to store the buckets and the vertices within the buckets, 
we can eventually obtain a DP-algorithm
that completes in $O(m + n)$ runtime. 
Here, the discretization due to bucketing introduces some additive 
polylogarithmic error, but asymptotically 
we still preserve the $(2, O(\frac1\epsilon \log^{2.5}n))$-approximation as before.
We defer the detailed 
algorithm and analysis to Section~\ref{sec:linear}. 

\paragraph{Proof techniques.}
Proving our algorithm DP turns out to be rather non-trivial. Specifically, our
algorithm is not a simple sequential
composition of the various underlying building blocks (e.g., prefix sum  
mechanism, and outstanding counter threshold queries). 
Therefore, we 
cannot simply analyze each building block separately and then use
standard composition theorems
to get the desired DP guarantees.
The problem is that the building blocks are interleaved in an adaptive way:
the outcome of one step of the prefix sum mechanism
corresponding to some vertex
will affect 
the input to the  
next step of some outstanding counter query, which will then affect
the input to the next step of prefix sum mechanism. 
Despite the complex and adaptive nature of our algorithm, we show that the privacy analysis of our algorithm can be reduced the privacy bounds of sparse-vector-technique.
The actual proof is involved and we defer
the detailed exposition to Section~\ref{sec:dpproof}, 
Appendix~\ref{sec:utilproof}, and Appendix~\ref{sec:linearproof}. 
All the missing proofs are available in the appendices with the same Theorem number.
\elaine{say more?}

\ignore{
\subsection{Old Text}

Our differential private algorithm is inspired by this simple greedy algorithm. We use prefix sum mechanism introduced by Dwork et al. \ref{???} in designing our algorithm, and we show that using the prefix sum mechanism we can efficiently find a vertex $i^*_{\min}$ that has a small degree. Although, the vertex $i^*_{\min}$ found by our algorithm might not have the minimum degree, we show that the degree of this vertex is at most $d_S(i_{\min}) + \log^{2.5} n$ where $d_S(i_{\min})$ is the minimum degree of a vertex in the induced subgraph of $S$. In section \ref{???}, we use this tool, and we achieve an algorithm with the multiplicative approximation factor $2$ that has an additive error of $O(\log^{2.5} n)$. In section \ref{???} we improve the running time of our algorithm and we show that it can be implemented in the linear time.

\begin{theorem}
Given a graph $G$, and parameters $\epsilon>0$ and $\delta, \sigma \in [0,1]$, there exists a $(\epsilon, \delta)$-DP that succeeds with the probability of $1-\sigma$ and finds a subgraph $S$ such that
$$ opt/2 - O(\frac{1}{\epsilon} \cdot \log^{2.5}n \cdot \log\frac1\sigma) \le \ds_G(S) \le opt \,,$$
where $opt$ is the density of the densest subgraph.
\end{theorem} 
}

\subsection{Additional Related Work} 

\paragraph{Differentially private algorithms for graphs.}
Early works on differentially private graph algorithms
focused on computing simple statistics 
from graphs.
The elegant work by Nissim et al.~\cite{smoothsens}
was the first to apply the DP notion to graph computations. 
Specifically, they showed how to release the cost of minimum spanning tree and the number of
triangles in a graph.
The work by Karwa et al.~\cite{dpgraphkarwa}
extended triangle counting to counting other subgraph structures differentially privately. 
The work by Hay et al.~\cite{dpgraphhay} 
considered how to release degree distribution while preserving DP.
Other works consider how to release the answers to all queries  
belonging ot some class on a given graph.
For example,  Gupta, Roth, and Ullman~\cite{dpidc}
consider how to compute a private synthetic data structure for answering
all cut queries with $O(n^{1.5})$ error  where $n$ denotes the number of vertices.
Gupta, Hardt, Roth, and Ullman show how to 
release the cut function on arbitrary graphs~\cite{dpconj}.

\paragraph{Closely related work.}
Our DSP problem can be viewed as a combinatorial optimization problem.
To the best of our knowledge, there exist few works that consider
how to solve combinatorial optimization problems 
differentially privatey in graphs.
The first such work was the elegant work by  
Gupta et al.~\cite{dpcombopt}.
They showed polynomial-time DP algorithms for 
approximating the min-cut and vertex cover problem. For min-cut, their
solution can report the vertices on both sides of the cut. 
However, for vertex cover, their algorithm 
cannot report the exact set of vertices in the vertex cover --- instead, it outputs
a permutation of vertices, and if one knows the set of edges, one can recover a good vertex cover from this permutation.

\paragraph{Other notions of privacy.}
In this paper, we consider the notion of edge differential privacy 
in graphs, which was a standard notion adopted
in various prior works~\cite{dpcombopt,dpconj,dpidc,smoothsens,dpgraphkarwa,dpgraphhay}. 
Some works study stronger notions.
For example, 
Kasiviswanathan et al.~\cite{nodedp} 
and Blocki et al.~\cite{nodedp-blocki}
investigate the notion of node differential privacy,
where neighboring graphs are defined as two graphs that differ in 
one node rather than one edge. 
Gehrke et al.~\cite{zkpriv-graph}
explore a different strengthening of differential privacy for social network graphs. 
An interesting future work direction is to understand whether we can design
accurate DSP algorithms that satisfy these strengthened notions of privacy.

\paragraph{Concurrent work of Nguyen and Vullikanti \cite{nguyen2021differentially}.} In the independent and concurrent work, Nguyen and Vullikanti designed a $(\epsilon, \delta)$-DP algorithm for the densest subgraph. Their $(\epsilon, \delta)$-DP algorithm is less secure than our $\epsilon$-DP algorithm since for our algorithm we have $\delta=0$. We also show in section \ref{sec:exp} that in practice, our algorithm achieves a significantly more accurate solution  in comparison to Nguyen and Vullikanti. It is also worth mentioning that while our algorithm runs in a \textit{linear-time}, it is not clear whether the algorithm proposed by Nguyen and Vullikanti can be realized in a linear work. In fact, the algorithm $\textsc{SEQDENSEDP}$ introduced in their paper \cite{nguyen2021differentially} has a quadratic running time. Nguyen and Vullikanti also provide a similar sub-logarithmic lower bound on the additive error of any differentially private algorithm. They also consider a PRAM version of their algorithm, while our algorithm is in the RAM model.

\ignore{
1. why this is an important problem

2. why previous approaches don't work

3. our the result is cool in the sense that we can output the vertex set, whereas
most prior DP algorithms for graphs cannot do that.
}

\section{Preliminaries}

\subsection{Densest Subgraph}
We define the densest subgraph problem~\cite{goldbergdensestsubgraph,charikardensestsubgraph}.
Let $G = (V, E)$ be an undirected graph and $S \subseteq V$.
We define $E(S)$ to be the edges induced by $S$, i.e., 
$E(S) := \{(i, j) \in E : i,j \in S\}$.

\begin{definition}
Let $S \subseteq V$. We define the density $d(S)$
of the subset $S$ to be 
$d(S) := \frac{|E(S)|}{|S|}$.
We define the density $d(G)$ of the undirected graph $G(V, E)$ to be 
$d(G) := \max_{S \subseteq V} d(S)$.
\end{definition}

Observe that $2 d(S)$ 
is simply the average degree of the subgraph induced by $S$.

\label{sec:dpds}

We next define 
the notion of approximation we use to measure
the algorithm's utility.

\begin{definition}[Approximation algorithm for densest subgraph]
Given an undirected graph $G = (V, E)$, 
we want to design a randomized algorithm \Alg that outputs 
1) a subset of vertices $S^* \subseteq V$ which is an estimate
of the densest subgraph; and 
2) a noisy density $d^*$, which is an estimate of $d(G)$.
Let $\alpha \geq 1$, $\beta > 0$, and $\sigma \in (0, 1]$. 
Such an algorithm \Alg is said to achieve $(\alpha, \beta)$-approximation 
with $1-\sigma$ probability, 
iff with $1-\sigma$ probability, the following hold:
\begin{enumerate}[leftmargin=5mm,itemsep=1pt]
\item 
$d(S^*) \geq d(G)/\alpha - \beta$, i.e.,
the algorithm outputs a dense set $S^*$ whose (true) density 
is close to $d(G)$; and
\item 
$|d^* - d(S^*)| \leq \beta$, i.e., the estimated density $d^*$
is close to the true density of the reported subgraph $S^*$. 
\end{enumerate}
\ignore{
We define the {\it approximation factor} and {\it error}
of the algorithm as below:
\begin{enumerate}[leftmargin=5mm,itemsep=1pt]
\item 
{\it Approximation factor.}
Let $\alpha \geq 1$ and $\sigma \in [0, 1]$.
We say that the algorithm 
achieves $(\alpha, \sigma)$ approximation iff 
with $1-\sigma$ probability, $d(S^*) \geq d(G)/\alpha$.
In other words, 
with $1-\sigma$ probability, the (true) density of reported set $S^*$ 
must be at least $1/\alpha$
fraction of the density of the densest subgraph of $G$. 

\item 
{\it Error.}
The algorithm's error is characterized  
by the distribution $|d^* - d(S^*)|$, i.e., 
how different the reported noisy density $d^*$ is w.r.t. 
to the true density of the output set $S^*$.
The error is a random variable, and later on we will often 
characterize its magnitude by its expectation or its tail bound.
\end{enumerate}
}
\label{defn:alg}
\end{definition}

As mentioned in Section~\ref{sec:intro}, 
we consider edge differential privacy in this paper. 
We say that two undirected graphs $G$ and $G'$ are {\it neighboring}
if the adjacency matrix of $G$ and $G'$ differ in only one entry (i.e.,
$G$ and $G'$ are the same except for the existence/non-existence
of a single edge).
The notion of $(\epsilon, \delta)$-differential privacy 
and $\epsilon$-differential privacy were formally defined 
in Section~\ref{sec:intro}.


\subsection{Mathematical Tools}
We define the symmetric geometric distribution~\cite{dpfinite,ndss11}
which can be viewed as a discrete version of the standard Laplacian
distrubtion~\cite{calibrating}.

\begin{definition}[Symmetric geometric distribution]
Let $\gamma > 1$.  The symmetric geometric
distribution $\Geom(\gamma)$ takes integer values
such that the probability mass function at $k$ is $\frac{\gamma - 1}{\gamma + 1} \cdot \gamma^{-|k|}$.
\end{definition}
We shall assume that sampling from the symmetric geometric 
distribution takes constant time.
How to sample such noises was discussed in detail in earlier
works on differential privacy~\cite{dpfinite}.

The global sensitivity of a function 
$f({\bf I})$, denoted $\Delta_f$, is defined as follows:
\[\Delta_f := 
\max_{{\bf I}, {\bf I}' \text{neighboring}}|f({\bf I}) - f({\bf I}')|_1\]

The following fact about the geometric mechanism (which is equivalent
to a discrete version of the Laplacian mechanism) was 
shown in previous works~\cite{calibrating,continual,icalp10,tissec11,dpfinite}.
\begin{fact}[Geometric mechanism]
The geometric mechanism 
$f'({\bf I}) := f({\bf I}) + \Geom(\exp(\epsilon/\Delta_f))$ satisfies $\epsilon$-DP. 
Moreover, 
for $\sigma \in (0, 1)$, for any input ${\bf I}$, the error 
$|f'({\bf I})-f({\bf I})|$
is upper bounded by 
$O(\frac{\Delta_f}{\epsilon} \cdot \log\frac1\sigma)$ with probability $1-\sigma$.
\label{fct:geommech}
\end{fact}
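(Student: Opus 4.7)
The plan is to prove the two claims separately: the $\epsilon$-DP guarantee and the tail bound on the additive error. Both proofs are direct calculations using the probability mass function of the symmetric geometric distribution, so the main work is bookkeeping rather than any deep argument.

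For the privacy part, I would fix two neighboring inputs ${\bf I}, {\bf I}'$ and an arbitrary integer $k$ in the output range, and compare the pointwise probabilities $\Pr[f'({\bf I}) = k]$ and $\Pr[f'({\bf I}') = k]$. Writing $a = f({\bf I})$ and $a' = f({\bf I}')$ and $\gamma = \exp(\epsilon/\Delta_f)$, the definition of $\Geom(\gamma)$ gives
\[
\frac{\Pr[f'({\bf I}) = k]}{\Pr[f'({\bf I}') = k]} \;=\; \frac{\gamma^{-|k - a|}}{\gamma^{-|k - a'|}} \;=\; \gamma^{|k - a'| - |k - a|}.
\]
The reverse triangle inequality yields $|k-a'| - |k-a| \le |a - a'| \le \Delta_f$, so the ratio is at most $\gamma^{\Delta_f} = \exp(\epsilon)$. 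Since this holds pointwise, summing over any subset $U$ of the (discrete) output space preserves the bound, establishing $\epsilon$-DP.

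For the utility part, observe that $|f'({\bf I}) - f({\bf I})| = |X|$ where $X \sim \Geom(\gamma)$. Using the pmf and the geometric series, for any integer $t \ge 1$,
\[
\Pr[|X| \ge t] \;=\; 2\sum_{k=t}^{\infty} \frac{\gamma - 1}{\gamma + 1}\gamma^{-k} \;=\; \frac{2}{\gamma + 1}\cdot\gamma^{-t+1} \;\le\; 2\gamma^{-t}.
\]
Setting $2\gamma^{-t} \le \sigma$ and solving gives $t \ge \log(2/\sigma)/\log\gamma$. Since $\log\gamma = \epsilon/\Delta_f$, this translates to $t = O\bigl(\tfrac{\Delta_f}{\epsilon} \log\tfrac{1}{\sigma}\bigr)$, which is the claimed bound.

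The one subtle point to be careful about is the regime where $\epsilon/\Delta_f$ is large, because then $\log\gamma = \epsilon/\Delta_f$ does not behave like the standard linearization $\gamma - 1 \approx \epsilon/\Delta_f$; however, the bound $t \ge \log(2/\sigma)/\log\gamma$ is still $O(\tfrac{\Delta_f}{\epsilon}\log\tfrac{1}{\sigma})$ as a function of $\Delta_f, \epsilon, \sigma$, so no regime-splitting is actually needed in the final asymptotic statement. Beyond that, the only thing worth flagging is that the pointwise $e^{\epsilon}$ bound automatically lifts to arbitrary measurable output events by summation over the discrete support, so there is no measure-theoretic issue to worry about. Overall, no step here is a genuine obstacle; I expect the proof to be essentially a one-line calculation for each of the two claims.
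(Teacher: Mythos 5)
Your proposal is correct and complete. Note that the paper does not supply its own proof of Fact~\ref{fct:geommech}: it is stated as a known fact with citations to prior works (\cite{calibrating,continual,icalp10,tissec11,dpfinite}), so there is no in-paper argument to compare against. Your direct computation is the standard one and checks out: the privacy claim follows from the pointwise ratio $\Pr[f'({\bf I})=k]/\Pr[f'({\bf I}')=k] = \gamma^{|k-a'|-|k-a|} \le \gamma^{\Delta_f} = e^{\epsilon}$ via the reverse triangle inequality (summing over any $U$ in the discrete output space then gives $\epsilon$-DP), and the tail bound follows from the exact identity $\Pr[|X|\ge t] = \tfrac{2}{\gamma+1}\gamma^{-t+1} \le 2\gamma^{-t}$ together with $\log\gamma = \epsilon/\Delta_f$. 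Your remark that no regime-splitting is needed because $t \ge \log(2/\sigma)/\log\gamma = \tfrac{\Delta_f}{\epsilon}\log(2/\sigma)$ already gives the asymptotic bound is also right; one can additionally note that this choice of $t$ is always $\ge 1$ when $\sigma < 1$, so the tail formula (valid for integers $t\ge 1$) applies without a separate edge case. The only implicit hypothesis worth flagging is that $f$ is integer-valued so that $f'({\bf I})$ and $f'({\bf I}')$ have the same (integer) support --- otherwise the two output distributions would not be mutually absolutely continuous and the pointwise ratio would be ill-defined; this is harmless here since all quantities perturbed in the paper (degrees, counts, edge totals) are integers.
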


\subsection{Building Block: Differentially Private Prefix Sum Mechanism}
Dwork et al.~\cite{continual} as well as Chan, Shi, and 
Song~\cite{icalp10,tissec11}
suggest a DP prefix sum mechanism.  
Initially, the mechanism is initialized
with $N$, which is an upper bound on the total number of  
values that will arrive, and at the time of initialization,
the mechanism's output is defined to be $0$.
Next, a sequence of at most $N$ integer values 
arrive one by one, and the
value that arrives at time $t \in [N]$ is denoted $x_t$.
In every time step $t$, 
the mechanism outputs
an estimate $\PSum_t$ of the prefix sum $\sum_{t' \leq t} x_{t'}$.
The term $\err_t := |\PSum_t - \sum_{t' \leq t} x_{t'}|$
measures the {\it error} of the estimate at time $t$.

We say that two integer sequences ${\bf x} := (x_1, \ldots, x_N)$ and 
${\bf x}' := (x'_1, \ldots, x'_N)$   
are neighboring, if 
the vector ${\bf x} - {\bf x}'$
has exactly one coordinate that is either $1$ or $-1$, 
and all other coordinates are $0$.
A prefix sum mechanism for length-$N$ sequences 
is said to satisfy $\epsilon$-adaptive-DP, iff
for any {\it admissible} (even unbounded) adversary $\algA$,
for any set $\Gamma$, 
\[
\Pr[\Expt^{\algA}_0 \in \Gamma]
\leq 
e^{\epsilon} \cdot \Pr[\Expt^{\algA}_1 \in \Gamma]
\]
where for $b\in \{0, 1\}$, $\Expt^{\algA}_b$ is defined as follows:

\begin{tcolorbox}
\begin{center}
$\Expt^{\algA}_b$:
\end{center}
\begin{itemize}[leftmargin=5mm,itemsep=1pt,topsep=5pt]
\item 
Initialize a prefix sum mechanism denoted \PSum.
\item 
For $t := 1, 2, \ldots, N$:
\begin{itemize}[leftmargin=5mm,itemsep=1pt]
\item 
\algA outputs the next values $x^{(0)}_t$ and $x^{(1)}_t$;
\item 
Input $x^{(b)}_t$ to \PSum, and send $\PSum$'s new output to \algA.
\end{itemize}
\item 
Output $\algA$'s view which includes 
the sequence of all outputs produced by \PSum. 
\end{itemize}

\paragraph{Admissible \algA.}
\algA is said to be admissible, iff with probability $1$, 
the two sequences 
it produces 
$\{x^{(0)}_t\}_{t\in [N]}$ and $\{x^{(1)}_t\}_{t \in [N]}$
are neighboring.
\end{tcolorbox}



The earlier works~\cite{continual,icalp10,tissec11} 
prove the following theorem about such
a DP prefix-sum mechanism:

\begin{theorem}[DP prefix-sum mechanism~\cite{continual,icalp10,tissec11}]
There is an $\epsilon$-adaptive-DP prefix-sum mechanism satisfying the above 
syntax, and moreover,
\begin{enumerate}
\item 
for any fixed $t \in [N]$ and $\sigma \in (0, 1)$, 
with probability $1-\sigma$, 
$\err_t < O(\frac{1}{\epsilon} \cdot \log^{1.5} N \cdot \log\frac{1}{\sigma})$.
\item 
making $n$ updates to the prefix-sum mechanism takes total time $O(n)$, that is,
the average time 
per update is $O(1)$.
\end{enumerate}
\label{thm:psum}
\end{theorem}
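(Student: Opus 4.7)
The plan is to adapt the dyadic binary-tree mechanism of Dwork et al.\ and Chan--Shi--Song. I would arrange the $N$ time steps as the leaves of a complete binary tree with $L := \lceil \log_2 N \rceil + 1$ levels, so each internal node $v$ corresponds to a dyadic range $I_v \subseteq [N]$. As inputs $x_1, \dots, x_N$ arrive, I maintain at each node $v$ the partial sum $\pi_v := \sum_{t \in I_v} x_t$; the moment $I_v$ completes (i.e., at time $\max I_v$) I finalize $v$ by publishing $\widetilde{\pi}_v := \pi_v + \eta_v$, where $\eta_v \sim \Geom(\exp(\epsilon/L))$ is drawn independently. To answer a prefix-sum query at time $t$, I decompose $[1,t]$ into its canonical set of at most $L$ dyadic intervals whose nodes have already been finalized, and return the sum of the corresponding $\widetilde{\pi}_v$'s.

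For privacy, the key structural fact is that each coordinate $x_t$ participates in exactly $L$ partial sums---those indexed by the ancestors of the $t$-th leaf---so flipping a single coordinate by $\pm 1$ shifts exactly $L$ of the $\pi_v$ values by $1$ and leaves all others untouched. By Fact~\ref{fct:geommech}, the publication of each finalized $\widetilde{\pi}_v$ is $(\epsilon/L)$-DP against a unit change in $\pi_v$, so sequential composition across the $L$ affected nodes yields $\epsilon$-DP. For the adaptive variant I would run a hybrid argument over $\Expt^{\algA}_b$, ordered by the time of finalization and touching only the $L$ ``affected'' nodes: at each hop the adversary's state-up-to-that-point is fixed, and the $(\epsilon/L)$-DP guarantee of the geometric mechanism applies against whatever neighboring pair $\algA$ has chosen so far. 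The admissibility condition ensures that the two sequences produced by $\algA$ differ in exactly one coordinate, which is exactly the hypothesis the per-node bound needs.

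For utility, $\err_t$ is the sum of at most $L$ independent mean-zero $\Geom(\exp(\epsilon/L))$ variables, each with variance $\Theta(L^2/\epsilon^2)$ and sub-exponential parameter $O(L/\epsilon)$. A Bernstein-type tail bound---or more crudely, a union bound using the per-node tail $\Pr[|\eta_v| > c \cdot \tfrac{L}{\epsilon} \log(L/\sigma)] \le \sigma/L$---yields $|\err_t| \le O\bigl(\tfrac{1}{\epsilon}\, \log^{1.5} N \cdot \log(1/\sigma)\bigr)$ with probability $1-\sigma$, as claimed. (A tighter sub-exponential concentration would even give a $\sqrt{\log(1/\sigma)}$ dependence, but the weaker form already suffices.)

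For runtime, I would implement the tree lazily using the textbook ``binary-counter'' carry trick: keep one in-progress slot per level, accumulate $x_t$ into the level-$0$ slot, and whenever a level-$i$ slot has absorbed both of its children, add fresh geometric noise, push the finalized value up one level, and clear the lower slot. By the amortized analysis of binary-counter increments, the total number of slot-promotions over $n$ updates is $O(n)$, and since each promotion does $O(1)$ work (one addition, one noise sample), the $n$ updates run in $O(n)$ total time. The main obstacle I anticipate is the adaptive-DP composition step: one must be careful that the hybrid can indeed pick off one $\eta_v$ at a time even though $\algA$ may adaptively interleave inputs targeted at other nodes. I would resolve this by noting that before $v$ is finalized its noise $\eta_v$ has not entered $\algA$'s view at all, so when the hybrid swaps $\eta_v$'s distribution the ``past'' is independent of this randomness and the per-node geometric-mechanism guarantee can be invoked with the remaining future noises drawn consistently on both sides.
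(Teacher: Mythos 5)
The paper itself does not prove Theorem~\ref{thm:psum}: it cites the binary-tree mechanism of Dwork et al.\ and Chan--Shi--Song and adds only the one-line remark that their analyses extend to adaptively chosen inputs. Your proposal reconstructs exactly that mechanism and its privacy, utility, and amortized-runtime analysis, so it is the same approach as the cited works. One small inaccuracy worth fixing: the crude per-level union bound you offer as an alternative---bounding each of the $L$ ancestor noises separately by $O(\tfrac{L}{\epsilon}\log(L/\sigma))$---yields only $O(\tfrac{1}{\epsilon}\log^{2}N\cdot\log(N/\sigma))$, not $O(\tfrac{1}{\epsilon}\log^{1.5}N\cdot\log\tfrac{1}{\sigma})$; to reach the $\log^{1.5}N$ dependence you genuinely need the sub-exponential (Bernstein-type) concentration on the sum of the $L$ independent geometrics, which you also name and which is what the cited works use.
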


Note that although the earlier works~\cite{continual,icalp10,tissec11}
stated only the non-adaptive version 
of the above theorem where the sequence is not chosen adaptively,  
it is not hard to see that their proofs actually
work for adaptive sequences too.

\ignore{
We note that the proofs in these papers~\cite{continual,icalp10,tissec11}
imply that the DP guarantee holds even when  
the sequence of values is chosen adaptively, 
after having observed the answers of the mechanism in previous time steps.  
}

\section{A Quasilinear-Time Scheme}


\ignore{
\subsection{Warmup}

\elaine{we probably don't need the warmup in the final paper, just mention
it in narative in the technical roadmap.}

\begin{mdframed}
\begin{center}
{\bf Differentially Private Densest Subgraph}
\end{center}

\paragraph{Parameters and notations.}
Let $G := (V, E)$ be the input graph.
\elaine{FILL: $\epsilon_0$, $\epsilon_1$, $\epsilon'$, $\delta_1$}

\paragraph{Algorithm.}
\begin{enumerate}[leftmargin=5mm,itemsep=1pt]
\item 
Every vertex $v \in V$ computes 
its noisy degree $D(v) = \dgr(v) + \Geom(e^{\epsilon_0/2})$
where $\dgr(v)$ denotes $v$'s true degree.
\item 
Every vertex $v \in V$ initializes a DP-prefix-sum algorithm
with the parameters $(\epsilon_1, \delta_1)$.
Henceforth we use ${\PSum}(v)$
denote the DP-prefix-sum instance for the vertex $v$; moreover,
the notation ${\PSum}(v)$
also denotes the current outcome of the algorithm ${\PSum}(v)$.
\item 
Let $S := V$ and $d_{\rm max} := 0$. Repeat the following until $S$ is empty:
\item 
\begin{enumerate}[leftmargin=5mm,itemsep=1pt]
\item 
find the vertex $v \in S$ whose $D(v) - \PSum(v)$ 
is the smallest; 
\item 
if $d_{\rm max} < D(v) - \PSum(v)$, 
then update $d_{\rm max} := D(v) - \PSum(v)$ and let $S^* := S$; 
\item 
remove $v$ from $S$; 
\item 
for each residual $u \in S$: 
if $(u, v) \in E$, input $1$ to ${\PSum}(u)$; else 
input $0$ to ${\PSum}(u)$.
\end{enumerate}
\item 
Output $S^*$ and 
$d^* := \min\left(\frac{E(S^*) + \Geom(\exp({\epsilon'}))}{|S^*|}, |S^*|\right)$.
\end{enumerate}
\end{mdframed}

\subsection{Efficient Construction}
}

\subsection{Detailed Construction}
\label{sec:mainalg}

\elaine{note: is it important that cnt + noise is not used in computing the next 
vertex to remove?}

We first describe an algorithm
that runs in time quasilinear in $m + n$ where $m$ denotes
the number of edges and $n$ denotes the number of vertices. 
The intuition behind our algorithm has been explained in Section~\ref{sec:roadmap}.
Later in Section~\ref{sec:linear}, we describe how to improve
the algorithm's runtime to $O(m + n)$.

\SetKwInOut{Parameter}{Data}
\SetKwInOut{Remark}{Remark}

\begin{algorithm} [h]
\Remark{Below is the meta-algorithm. 
Immediately after the meta-algorithm description, we describe additional data structure
tricks to run it in quasilinear time.}
 \Parameter{Let $G := (V, E)$ be the input graph.
Let $\epsilon_0 = \epsilon_1 = \epsilon_2 = \epsilon' = \epsilon/4$. 
Let $\thresh := \frac{C}{\epsilon} \log n \log \frac1\sigma$ for a 
suitably large constant $C$.}
 \begin{algorithmic} [1]
 \STATE  Every vertex $v \in V$ computes 
its noisy degree $D(v) = \dgr(v) + \Geom(e^{\epsilon_0/2})$
where $\dgr(v)$ denotes $v$'s true degree.
\label{step:noisydeg}
\STATE 
Every vertex $v \in V$ initializes an $\epsilon_1$-DP prefix-sum algorithm.
\elaine{note: i made it epsilon dp here}
Henceforth we use ${\PSum}(v)$
denote the DP-prefix-sum instance for the vertex $v$; moreover,
the notation ${\PSum}(v)$
also denotes the current outcome of the algorithm ${\PSum}(v)$.
\STATE
\label{step:initcnt}
Every vertex $v \in V$ initializes 
a counter, denoted $\Cnt(v) := 0$.
Additionally, initialize a fresh noise $\noise(u) := \Geom(e^{\epsilon_2})$.
\label{step:initpsum}
\STATE Let $S := V$ and $d_{\rm max} := 0$. Repeat the following until $S$ is empty:
\label{step:loop}
\begin{enumerate}[label=(\alph*),leftmargin=5mm,itemsep=1pt]
\item 
Find the vertex $v \in S$ whose $D(v) - \PSum(v)$ 
is the smallest.
\item 
If $d_{\rm max} < D(v) - \PSum(v)$, 
then update $d_{\rm max} := D(v) - \PSum(v)$ and let $S^* := S$.
\item 
Remove $v$ from $S$. 
\item 
For each $u \in S$ such that $(u, v) \in E$:
let $\Cnt(u) := \Cnt(u) + 1$.
\item For each $u \in S$:
\begin{itemize}[leftmargin=5mm]
\item Let $\noisep := \Geom(e^{\epsilon_2})$ be a fresh noise.
\item If $\Cnt(u) + \noise(u) + \noisep > \thresh$: 
input $\Cnt(u)$ to $\PSum(u)$, 
reset $\Cnt(u) := 0$ and initialize a fresh noise $\noise(u) := \Geom(e^{\epsilon_2})$.
\end{itemize}
\end{enumerate}
\RETURN $S^*$ and 
$d^* := \min\left(\frac{E(S^*) + \Geom(\exp({\epsilon'}))}{|S^*|}, |S^*|\right)$.
 \end{algorithmic}
\caption{Differentially Private Densest Subgraph - Quasilinear-Time Variant.}
\label{alg:quasi}
\end{algorithm}

\paragraph{Running it in quasilinear time.}
In the above algorithm, if we run Line (4e) na\"ively as is, \elaine{hard coded ref}
it will require $\Omega(n^2)$ time.
However, with an additional data structructure trick, we can 
run the above algorithm in time $O(m + n \log n)$.
Recall that every time a vertex $u$'s outstanding counter $\Cnt(u)$ is reset to $0$, 
we also sample a fresh $\mcal{E}(u)$, and subsequently until $\Cnt(u)$ is reset
to $0$ again, in every time step, we will 
check if $\Cnt(u) + \mcal{E}(u) + \noisep > {\sf T}$ 
where $\noisep$ is freshly sampled in the respective time step --- and if so,
$\PSum(u)$ must be updated.
Equivalently, we can change the sampling 
as follows: any time 
$\Cnt(u)$ gets updated (i.e., either reset to $0$
or incremented), we sample  
a random variable ${\boldsymbol \tau}(u)$, which means the following:
if $\Cnt(u)$ does not get updated, when  
is the next time step in which the event $\Cnt(u) + \mcal{E}(u) + \noisep > {\sf T}$
happens. Note that if ${\boldsymbol \tau}(u) > n$, we can simply treat 
${\boldsymbol \tau}(u) = \infty$.
We may assume that ${\boldsymbol \tau}(u)$ can be sampled in constant time, 
since it follows a geometric distribution 
(see earlier works~\cite{dpfinite} on how to do this).
Therefore, we can maintain a table 
$L[1..n]$ where $L[t']$ stores a linked list of 
vertices that want their \PSum   
updated in time step $t'$. 
If a vertice $u$'s ${\boldsymbol \tau}(u)$ value changes 
from $t_1$ to $t_2$ before time step $t_1$,  
then we remove $u$ from $L[t_1]$ and add $u$ to $L[t_2]$ ---
this can be accomplished in constant time if 
in the table ${\boldsymbol \tau}(u)$ that stores the 
next \PSum update time for $u$, we also store a pointer to $u$'s position
in the table $L$.
Instead of executing Line 4(e) in a brute-force way, 
we can simply read $L[t]$ in each time step
$t$ to look for the vertices that want their \PSum updated in time step $t$.

\ignore{
Note that if $\Cnt(u)$ is being reset to $0$, we sample the new 
$\mcal{E}(u)$ first and then sample ${\boldsymbol \tau}(u)$.
In this way, we can maintain a binary heap data structure  
using ${\boldsymbol \tau}(u)$ as the comparison key for all residual vertices.
Instead of running Line 4(e) in a brute-force way, we instead
query the minimum ${\boldsymbol \tau}(u)$ values from the heap, to find
those whose \PSum must be updated in this time step.
Whenever ${\boldsymbol \tau}(u)$ is updated, we delete the old entry for $u$
from the heap, and insert the new value. 
}

Moreover, we will also use a Fibonnaci heap 
to maintain the residual noisy degree (i.e., $D(v) - \PSum(v)$) of every vertex.
Due to the way ${\sf T}$ is chosen,  for a proper constant $C$, one can show through a standard Chernoff bound that 
with the probability of $1-\sigma$, 
there cannot be more than $3m$ \PSum updates in which
the true increment input to the \PSum instance is 0. 
This means that there cannot be more than $O(m)$ \PSum updates in total 
except with $\sigma$  probability.
Summarizing the above, the above algorithm can be executed 
in time $O(m  + n \log n)$ with $1-\sigma$ probability.

\elaine{may need to fit it under page limit}

\subsection{Differential Privacy and Utility Guarantees} 
\label{sec:dpproof}
We now formally state the algorithm's differential privacy guarantees
as well as utility.

\begin{theorem}[Differential privacy of the output]
The algorithm in Section~\ref{sec:mainalg}
satisfies $(\epsilon_0 + \epsilon_1 + \epsilon_2 + \epsilon')$-DP.
\label{thm:dp}
\end{theorem}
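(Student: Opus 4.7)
The plan is to fix two neighboring graphs $G,G'$ differing on a single edge $(a,b)$, account separately for the four independent noise sources—noisy degrees (budget $\epsilon_0$), per-vertex prefix-sum mechanisms (budget $\epsilon_1$), per-vertex threshold checks (budget $\epsilon_2$), and final density (budget $\epsilon'$)—and combine the losses by basic composition. The structural observation driving the accounting is that every graph-dependent computation is naturally indexed by a vertex: for each $u\in V$, the edge set is consulted only when computing $\dgr(u)$ and when deciding, at each removal step, whether to increment $\Cnt(u)$. Since the edge $(a,b)$ touches only the neighborhoods of $a$ and $b$, I will combine parallel-composition--style reasoning across $V$ with a careful bookkeeping at the two endpoints.

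The two easy components can be dispatched immediately. In line~\ref{step:noisydeg} the vector $\{\dgr(v)\}_{v\in V}$ agrees between $G$ and $G'$ on all but the two coordinates $a,b$, each shifting by $\pm 1$; independent $\Geom(e^{\epsilon_0/2})$ noise on each coordinate is $\epsilon_0/2$-DP by Fact~\ref{fct:geommech}, and basic composition over the two affected coordinates gives $\epsilon_0$-DP for the whole release. For the final output, once $S^\ast$ has been committed (post-processing from the privacy accountant's viewpoint), $|E(S^\ast)|$ has sensitivity at most $1$ under the flip of $(a,b)$, so perturbing it with $\Geom(e^{\epsilon'})$ is $\epsilon'$-DP by Fact~\ref{fct:geommech}; the truncation and division by $|S^\ast|$ are post-processing.

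The substantive content is the joint analysis of the prefix-sum instances and the threshold checks inside the loop of line~\ref{step:loop}. The plan is to view the loop as a composition of two vertex-indexed primitives: for each $u$, an $\epsilon_1$-DP prefix-sum mechanism (Theorem~\ref{thm:psum}) consuming the stream of accepted $\Cnt(u)$ values, and a sparse-vector-technique instance with threshold noise $\noise(u)$ (re-sampled after each $\top$) and fresh per-query noise $\noisep$, consuming the stream of indicators $\mathbf{1}[(u,v)\in E]$ as vertices $v$ are removed. For $u\notin\{a,b\}$, $u$'s neighbor set is identical in $G$ and $G'$, so once the public transcript is fixed, $u$'s streams are an identical function of it, and such vertices contribute no privacy loss. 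A key structural observation is that among $\{a,b\}$, only the endpoint removed \emph{last} can ever see a differing input: once the first of $a,b$ leaves $S$, its counter is never updated again, so the flip of $(a,b)$ perturbs at most one increment, and only in the stream of whichever endpoint is still in $S$ at that moment. Hence no group-privacy doubling is incurred at the endpoints, and applying Theorem~\ref{thm:psum} and the sparse-vector privacy bound to this single-position change contributes $\epsilon_1$ and $\epsilon_2$, respectively. The hard part will be the adaptive interleaving flagged in Section~\ref{sec:roadmap}: the vertex selected in line~4(a) is a randomized function of all prior \PSum outputs, so the per-vertex streams are correlated through the public transcript rather than being independent random objects, and the identity of the ``endpoint removed last'' is itself data-dependent. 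I would handle this by constructing an admissible adaptive adversary for the $\Expt^\algA_b$ experiment underlying Theorem~\ref{thm:psum}, together with a parallel adaptive adversary against the sparse-vector mechanism, so that the pair of adversaries jointly simulates the real mechanism's view; the DP guarantees of the two primitives then transfer to the simulated view by post-processing, and basic composition across all four sources yields the claimed $(\epsilon_0+\epsilon_1+\epsilon_2+\epsilon')$-DP bound.
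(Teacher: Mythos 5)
Your proposal follows essentially the same route as the paper's proof: fix a trace, decompose the privacy loss into the four sources (noisy degrees, threshold checks, prefix-sum mechanisms, final density estimate), observe that among the two endpoints of the flipped edge only the one removed \emph{later} can ever see a perturbed $\Cnt$ increment, reduce the threshold-check contribution to the sparse-vector technique on that single vertex's stream (with the two cases ``$\PSum(j)$ is never updated again'' and ``$\PSum(j)$ is updated at some $t''$''), reduce the prefix-sum contribution to a single-coordinate change fed into an $\epsilon_1$-adaptive-DP $\PSum$ instance, and sum up by basic composition. The paper makes your ``construct a pair of adaptive adversaries that jointly simulate the mechanism'' plan rigorous by conditioning on the trace and factoring its probability as $\Pr[\prefix_0]\cdot\prod_t\Pr[U_t\mid\prefix_{t-1}]\cdot\prod_u\Pr[\PSum(u)\text{ outputs }\cdot\text{ on }\cdot]$, which is exactly the disentanglement you gesture at; the two are isomorphic arguments.
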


Our algorithm's utility is stated in the following theorem:
\begin{theorem}[Utility of our algorithm]
Our algorithm in Section~\ref{sec:mainalg}
achieves $(2, O(\frac{1}{\epsilon} \cdot \log^{2.5}n \cdot \log\frac1\sigma))$-approximation with probability $1-\sigma$.
\label{thm:util}
\end{theorem}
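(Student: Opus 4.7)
The plan is to adapt Charikar's 2-approximation charging argument \cite{charikardensestsubgraph} to the noisy setting by controlling, uniformly over all vertices and all iterations, how far the quantity $D(v)-\PSum(v)$ that the algorithm uses deviates from the true residual degree $\deg_S(v)$ in the current induced subgraph.

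\textbf{Step 1 (accumulated error).} Apply Fact~\ref{fct:geommech} to the degree noises, Theorem~\ref{thm:psum} to each of the $n$ prefix-sum instances, and a standard geometric-tail bound to the threshold check $\Cnt(u)+\noise(u)+\noisep>\thresh$. A single union bound over the $n$ vertices and $n$ iterations then yields, with probability at least $1-\sigma$: (i) $|D(v)-\dgr(v)|=O(\log(n/\sigma)/\epsilon)$ for every $v$; (ii) $|\PSum(v) - \text{(true input sum to } \PSum(v)\text{)}| = O(\frac{1}{\epsilon}\log^{1.5}n\log(n/\sigma))$ at every iteration; and (iii) the outstanding counter satisfies $\Cnt(v)\le \thresh + O(\log(n/\sigma)/\epsilon) = O(\frac{1}{\epsilon}\log n\log(n/\sigma))$ throughout, because the reset fires once $\Cnt(v)$ exceeds $\thresh$ by more than the geometric tail of $\noise(u)+\noisep$. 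Plugging these into the identity
\[
D(v) - \PSum(v) \;=\; \deg_{S}(v) \;+\; \Cnt(v) \;+\; (D(v)-\dgr(v)) \;-\; (\PSum(v) - \text{true input sum}),
\]
we obtain $|D(v)-\PSum(v) - \deg_S(v)| \le \eta$ for every residual $v$ at every iteration, with $\eta = O\!\left(\frac{1}{\epsilon}\log^{2.5} n\,\log\frac{1}{\sigma}\right)$, as the $\log^{1.5}n$ of the prefix-sum error multiplied by the union-bound $\log(n/\sigma)$ supplies the dominant $\log^{2.5}n$ factor.

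\textbf{Step 2 (Charikar-style charging).} Let $S_{\opt}$ denote a true densest subgraph, so every $v\in S_{\opt}$ has induced degree at least $\opt$ in $S_{\opt}$ (otherwise deleting $v$ strictly increases density). Let $\tilde t$ be the first iteration at which the algorithm removes some vertex $v_{\tilde t}\in S_{\opt}$; at that moment $S_{\tilde t}\supseteq S_{\opt}$, so $\deg_{S_{\tilde t}}(v_{\tilde t}) \ge \deg_{S_{\opt}}(v_{\tilde t}) \ge \opt$, and Step 1 gives $M_{\tilde t}:=D(v_{\tilde t})-\PSum(v_{\tilde t}) \ge \opt - \eta$, so $d_{\rm max}\ge \opt-\eta$. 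Now let $S^*=S_{t^*}$ be the set recorded when $d_{\rm max}$ was last updated. Since $v_{t^*}$ minimizes $D(\cdot)-\PSum(\cdot)$ over $S_{t^*}$, every $u\in S^*$ satisfies $D(u)-\PSum(u)\ge d_{\rm max}$, and therefore $\deg_{S^*}(u) \ge d_{\rm max}-\eta \ge \opt - 2\eta$ by Step 1. Summing degrees gives $2|E(S^*)| \ge (\opt-2\eta)|S^*|$, i.e., $d(S^*) \ge \opt/2 - \eta$, while $d(S^*)\le \opt$ is immediate. For the reported estimate, the noise in $d^* = (|E(S^*)|+\Geom(e^{\epsilon'}))/|S^*|$ has magnitude $O(\log(1/\sigma)/\epsilon)$ with the remaining probability, and dividing by $|S^*|\ge 1$ only helps, so $|d^*-d(S^*)|\le \eta$.

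\textbf{Main obstacle.} The subtle point is Step 1(iii): because a fresh $\noisep$ is drawn at every iteration, the lag between $\Cnt(v)$ crossing $\thresh$ and the mechanism firing is itself random, and must be controlled uniformly across the up-to-$n^2$ potential checks. Choosing the constant $C$ in $\thresh=(C/\epsilon)\log n\log(1/\sigma)$ sufficiently large ensures simultaneously that (a) the reset fires before $\Cnt(v)$ overshoots $\thresh$ by more than $O(\log(n/\sigma)/\epsilon)$, with all but $\sigma$ probability, and (b) via the Chernoff-type calculation already invoked in Section~\ref{sec:mainalg}, the total number of genuine $\PSum$ updates stays $O(m)$, which is what is needed for the quasilinear runtime but also what keeps the $\log(n/\sigma)$ factor inside the union bound small enough to deliver the advertised $\eta = O(\frac{1}{\epsilon}\log^{2.5}n\log\frac{1}{\sigma})$. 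Once this uniform error bound is in hand, the reduction to Charikar's original charging in Step 2 is straightforward.
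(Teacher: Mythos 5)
Your proof is correct, and it reaches the advertised $(2,\eta)$-bound with the same $\eta = O\big(\tfrac{1}{\epsilon}\log^{2.5}n\log\tfrac1\sigma\big)$. Step 1 is essentially the paper's Claim~\ref{clm:bounderr}: you decompose $D(v)-\PSum(v)-\deg_S(v)$ into the three noise sources (degree noise, prefix-sum error, outstanding counter $\Cnt(v)$), union-bound over all vertices and time steps, and bound $\Cnt(v)$ by $\thresh$ plus geometric tails of $\noise(v)$ and $\noisep$ by observing that the threshold test must fire once $\Cnt(v)$ overshoots. This is the same argument, arranged the same way.

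Where you diverge is Step 2. The paper proves the approximation via Charikar's orientation lemma (Lemma~\ref{lem:dmax}): it orients each edge towards the endpoint removed first, so $\max_t |E(v_t,S_t)|$ is the maximum in-degree of an orientation and hence upper-bounds $d(G)$, and then argues $d_{\rm max}$ tracks this quantity to within $O(\err)$ and certifies $d(S^*)$. You instead use the other classical proof: first show $d_{\rm max}\ge \opt-\eta$ by looking at the first iteration $\tilde t$ that peels a vertex of a true densest subgraph (all of whose residual degrees in $S_{\tilde t}\supseteq S_{\opt}$ are $\ge\opt$), and then observe that at the recording time $t^*$ every surviving $u$ has $D(u)-\PSum_{t^*}(u)\ge d_{\rm max}$, hence true residual degree $\ge \opt-2\eta$, and sum. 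Both routes are standard for Charikar; yours avoids invoking a separate orientation lemma and keeps the constant slightly tighter (you get $\opt/2-\eta$ versus the paper's $\opt/2-2\err$ with its own $\err$), while the paper's has the small aesthetic advantage that $d_{\rm max}$ directly plays the role of the certificate from Lemma~\ref{lem:dmax}. The final step bounding $|d^*-d(S^*)|$ via Fact~\ref{fct:geommech} matches the paper.
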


\section{A Linear-Time Algorithm}
\label{sec:linear}

\begin{figure} [t!]
\centering
\begin{subfigure} [b] {0.49\textwidth}
  \includegraphics[width=\textwidth]{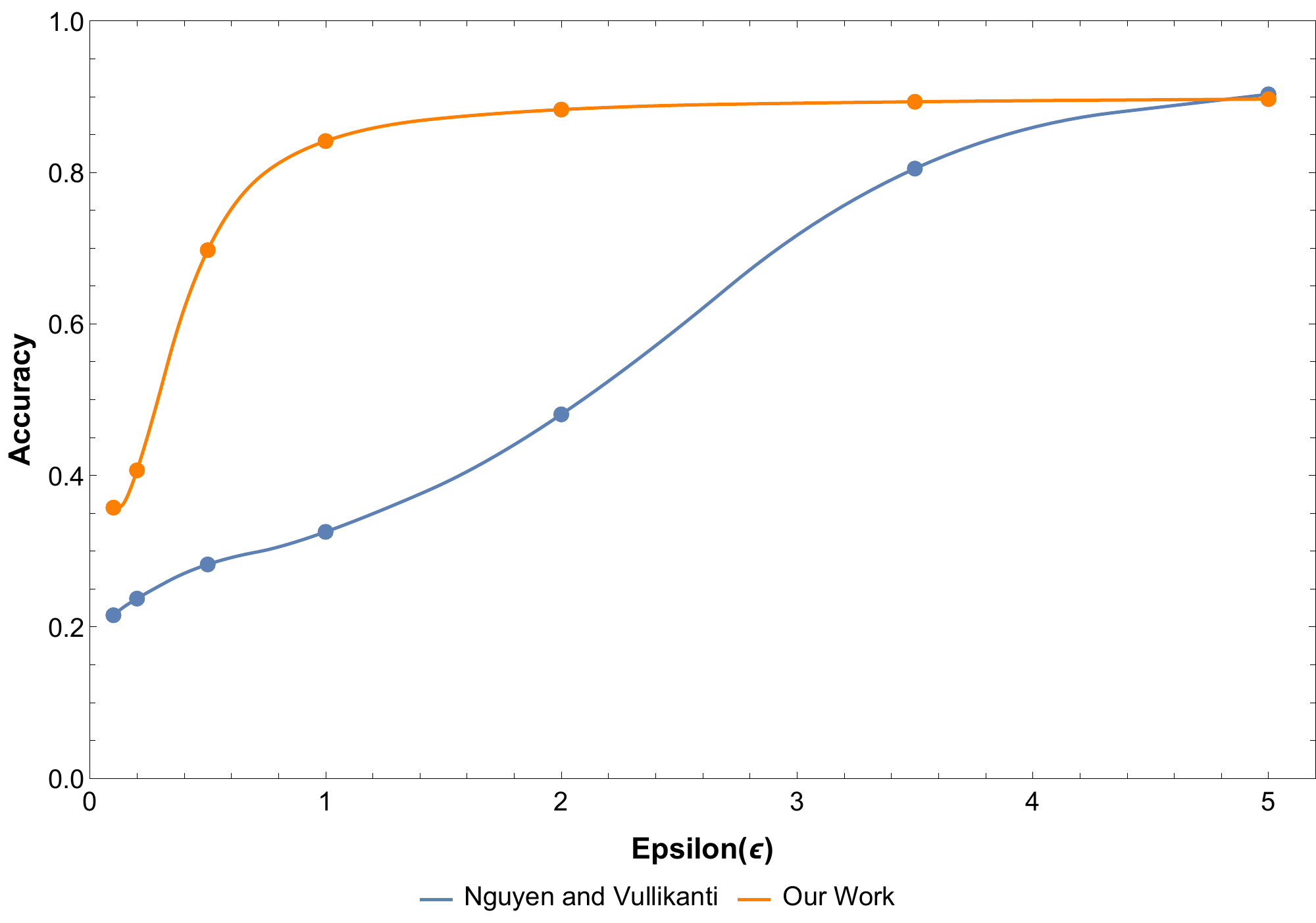}
  \vspace{-0.5cm}
  \caption{ca-Astro Network}
 \end{subfigure}
 \begin{subfigure} [b] {0.49\textwidth}
  \includegraphics[width=\textwidth]{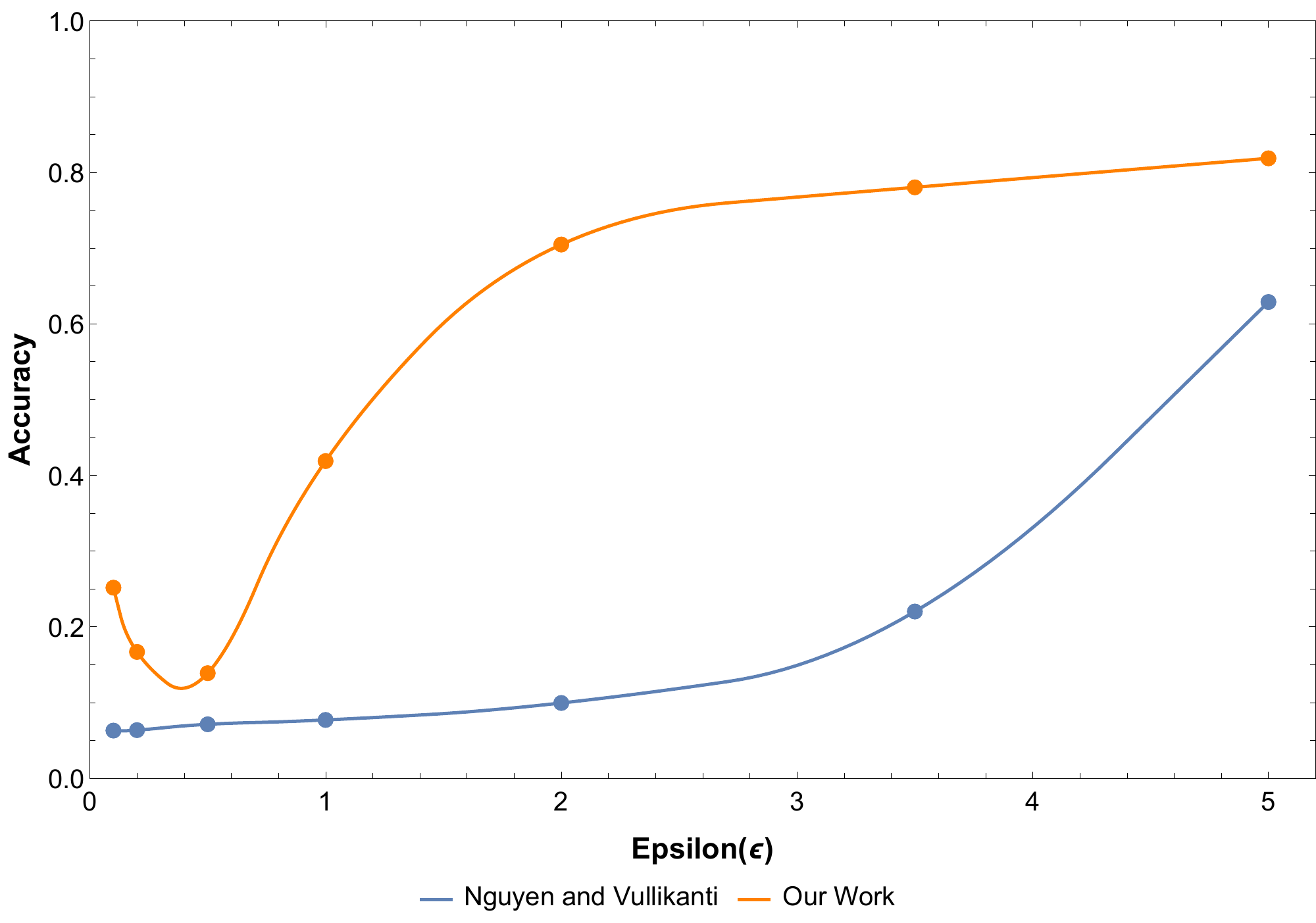}
  \vspace{-0.5cm}
  \caption{ca-GrQc Network}
 \end{subfigure}
  \begin{subfigure} [b] {0.49\textwidth}
  \includegraphics[width=\textwidth]{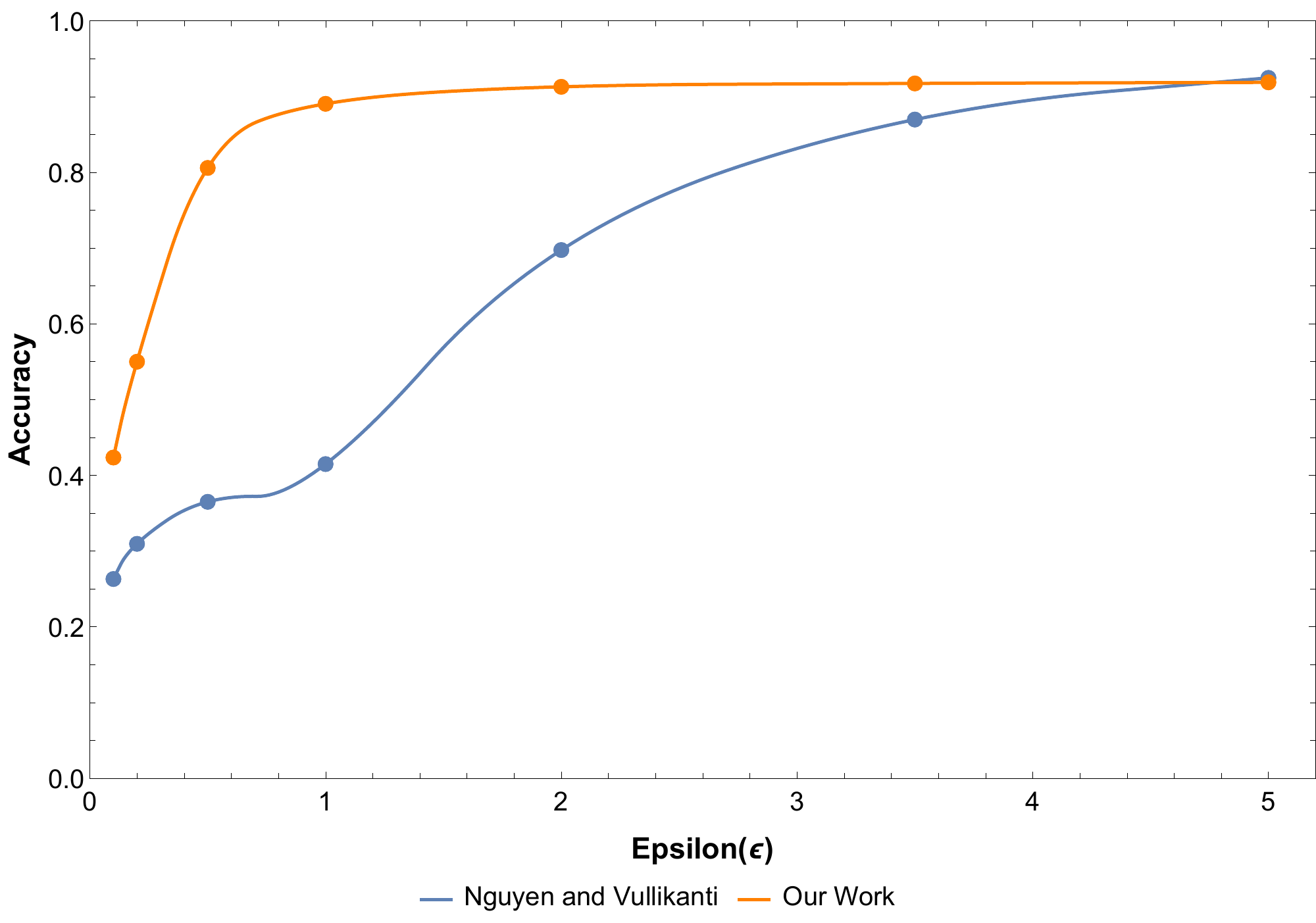}
  \vspace{-0.5cm}
  \caption{musae\underline{ }DE Network}
 \end{subfigure}
  \begin{subfigure} [b] {0.49\textwidth}
  \includegraphics[width=\textwidth]{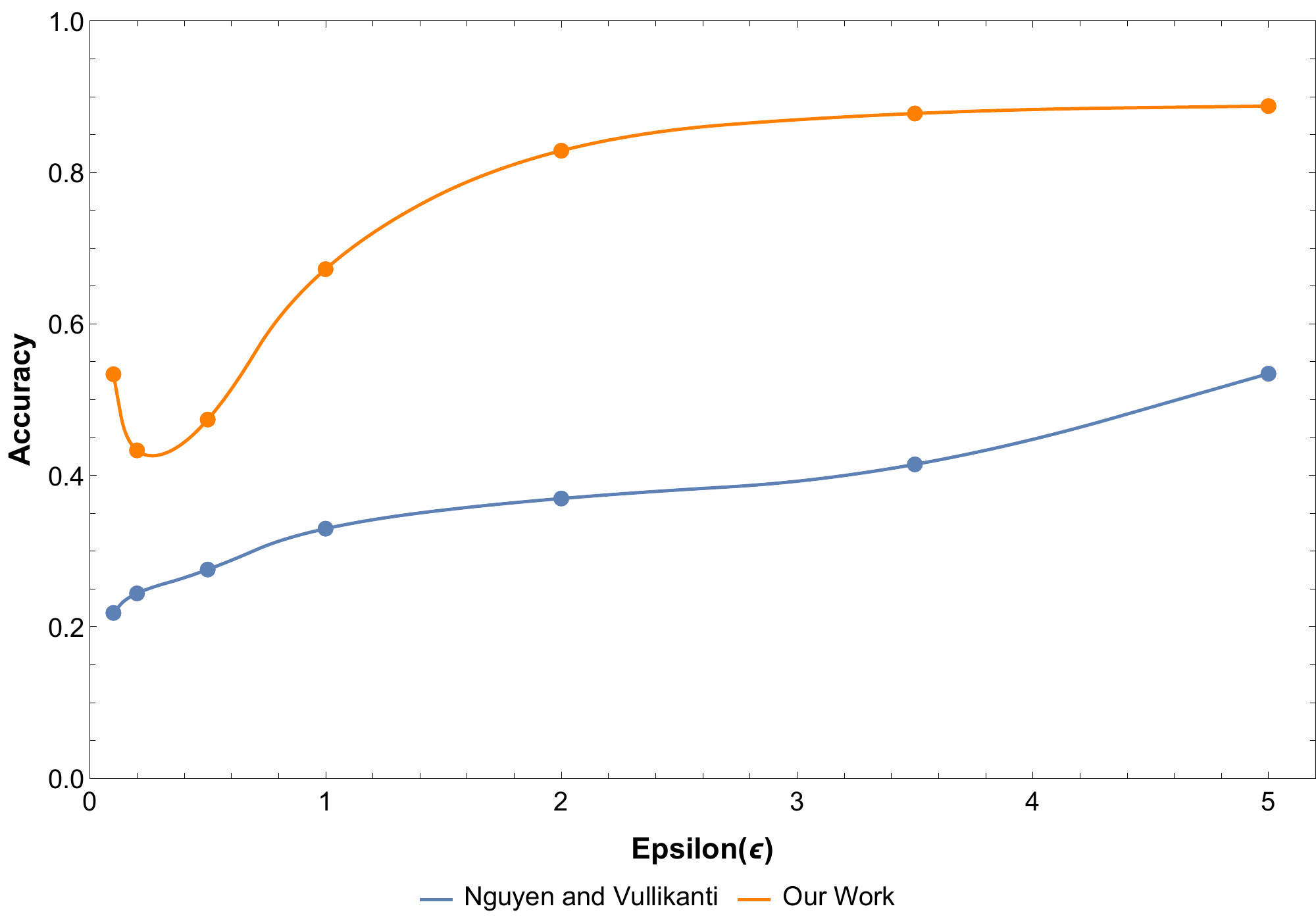}
  \vspace{-0.5cm}
  \caption{musae\underline{ }ENGB Network}
 \end{subfigure}
  \begin{subfigure} [b] {0.49\textwidth}
  \includegraphics[width=\textwidth]{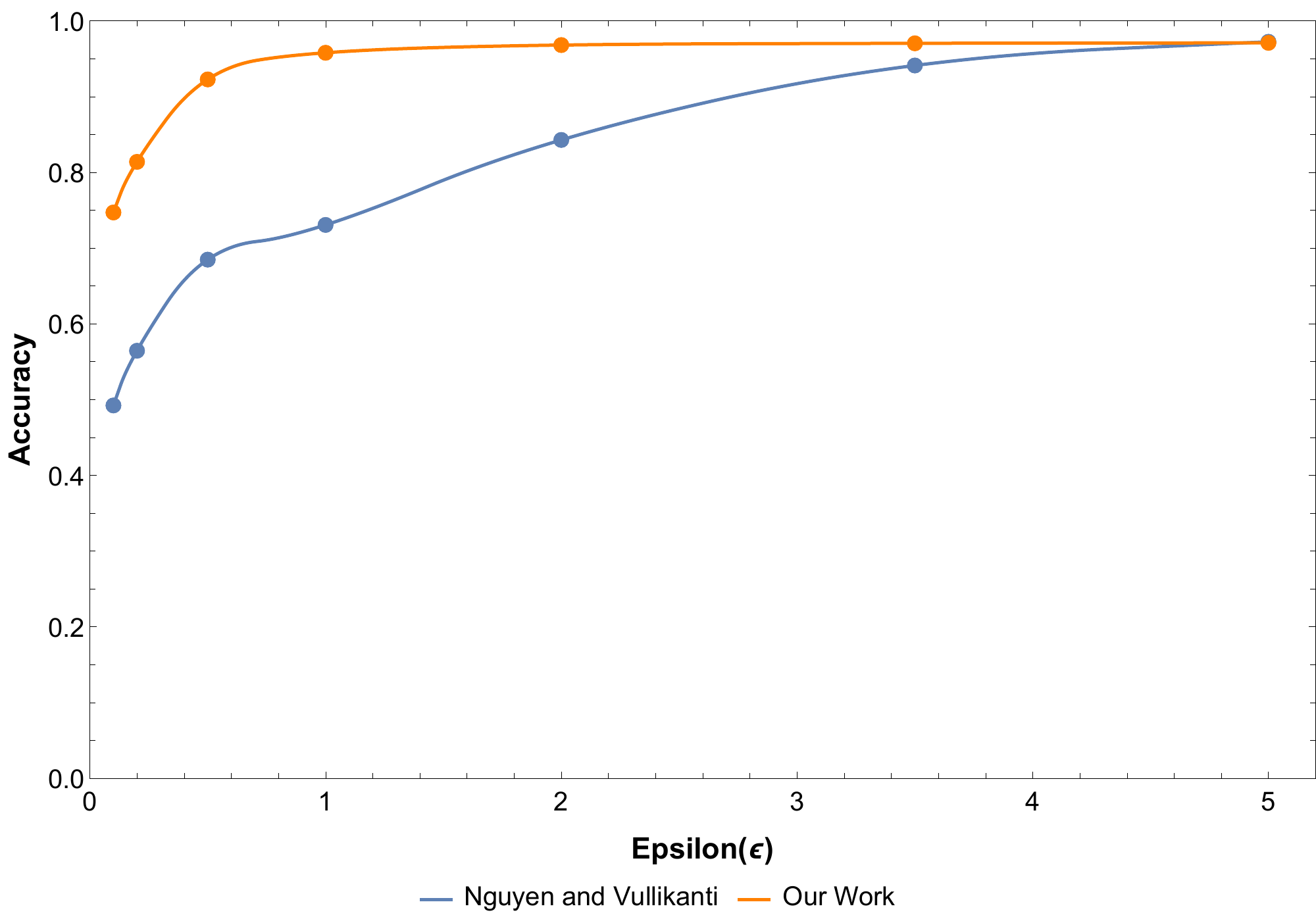}
  \vspace{-0.5cm}
  \caption{socfb-Amherst41 Network}
 \end{subfigure}
   \begin{subfigure} [b] {0.49\textwidth}
  \includegraphics[width=\textwidth]{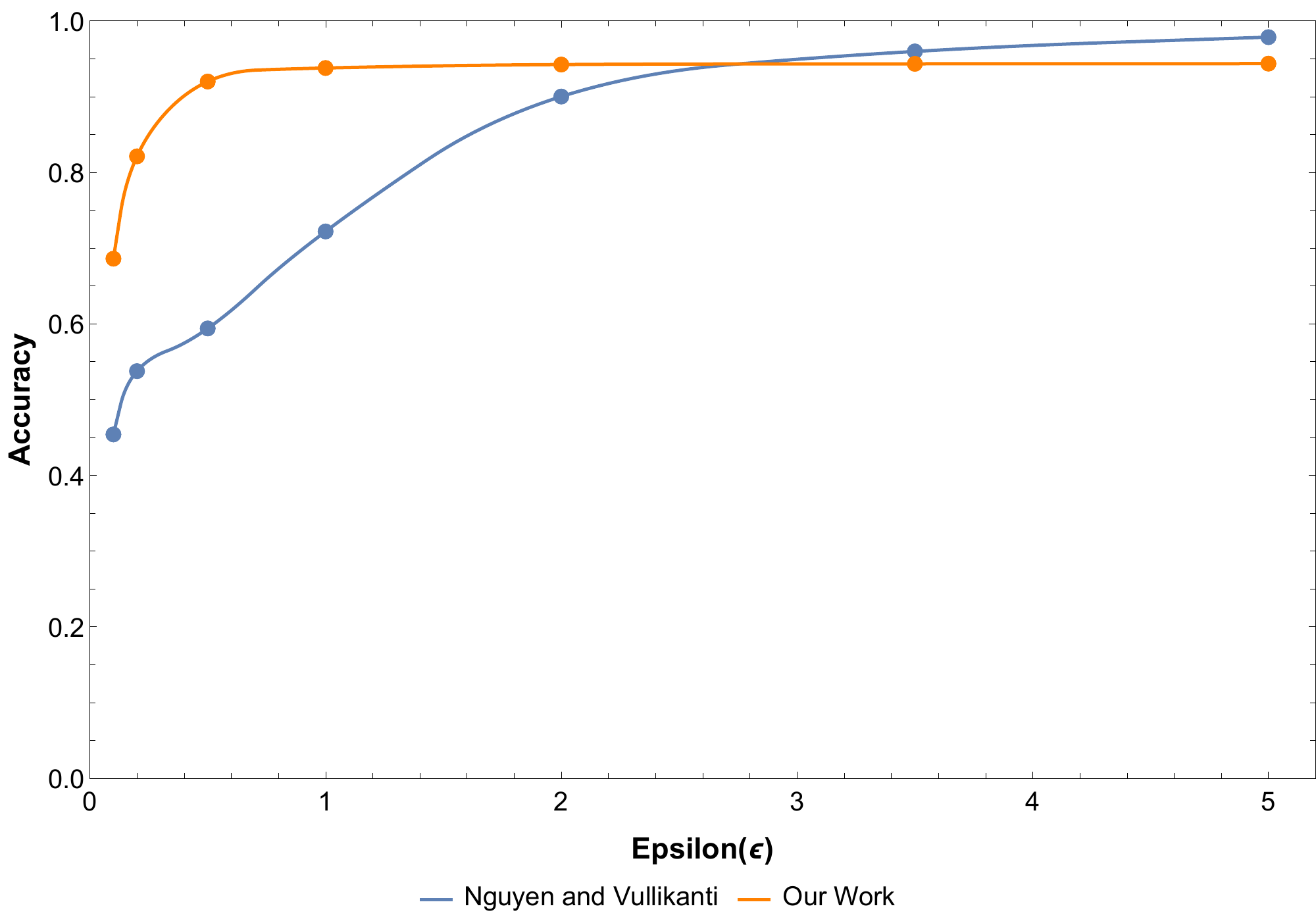}
  \vspace{-0.5cm}
  \caption{socfb-Auburn71 Network}
 \end{subfigure}
 \caption{ Accuracy of our linear-time algorithm as well as the accuracy of the Nguyen and Vullikanti \cite{nguyen2021differentially}. The orange line shows the accuracy of our proposed algorithm, and the blue line shows the accuracy of Nguyen and Vullikanti.
 }
 \label{fig:accuracy}
\end{figure}

As mentioned, the algorithm in Section~\ref{sec:mainalg}
takes $O(m + n\log n)$ time.
We suggest an improved version that runs in linear time, inspired
by a trick suggested by Charikar~\cite{charikardensestsubgraph}.


Observe that $\bigl| \min_{v \in S_{t+1}} |E(v, S_{t+1})|
 - \min_{v \in S_t} |E(v, S_t)| \bigr| \leq 1$.
Our algorithm is maintaining a noisy estimate $D(v) - \PSum(v)$ of 
$|E(v, S_t)|$ for all residual vertices $v$.
As we discussed, all estimates
have at most $O(\frac1\epsilon \log^{2.5}n \log\frac1\sigma)$ error
with $1-\sigma$ probability.
Henceforth let $C$ be a sufficiently large constant,
and let $\err := \frac C \epsilon \log^{2.5}n \log\frac1\sigma$. 
We can discretize the value of $D(v) - \PSum(v)$ into 
$B := \lceil n/\err + 1 \rceil$ buckets.
The $i$-th bucket will contain values 
from the range $[(i-1) \cdot \err - \err/2, (i-1)\cdot \err + \err/2]$.
We can now maintain a data structure such that each vertex
is placed in the right bucket depending on the current estimate $D(v) - \PSum(v)$. 
Inside each bucket, we maintain a linked list of vertices.
We also maintain an array such that each vertex stores 
the bucket it currently belongs to, as well as 
its pointer in the corresponding linked list.

With this idea, we can modify the algorithm in Section~\ref{sec:mainalg}
into a variant that runs in linear time.

\SetKwInOut{Init}{Initialization}
\begin{algorithm} [h]
 \Parameter{Let $G := (V, E)$ be the input graph.
Let $\epsilon_0 = \epsilon_1 = \epsilon_2 = \epsilon' = \epsilon/4$. 
Let $\thresh := \frac{C}{\epsilon} \log n \log \frac1\sigma$ for a 
suitably large constant $C$.}
\Init{Run the initialization steps of Algorithm \ref{alg:quasi} in Section~\ref{sec:mainalg}, 
and moreover, place all vertices in the right bucket.
Recall that each vertex stores a pointer to where it  
resides in the linked list of its bucket. 
Initially, let $\idx = 1$.
We may assume that the (imaginary) 0-th 
 bucket is always empty.}
 \begin{algorithmic} [1]
\STATE Let $S := V$ and $d_{\rm max} := 0$. Repeat the following until $S$ is empty:
\label{step:loop2}
\begin{enumerate}[label=(\alph*),leftmargin=5mm,itemsep=1pt]
\item 
Let $\idx'$ be the bucket from which a vertex is removed
in the previous time step. 
Check whether 
each bucket $\idx' - 1, \idx'$, $\cdots$ is non-empty, and 
let $\idx$ be the smallest non-empty bucket among these.
Let $v$ be an arbitrary vertex from this bucket $\idx$.
Remove $v$ from the bucket.
\item 
If $d_{\rm max} < D(v) - \PSum(v)$, 
then update $d_{\rm max} := D(v) - \PSum(v)$ and let $S^* := S$.
\item 
Remove $v$ from $S$. 
\item 
For each $u \in S$ such that $(u, v) \in E$:
let $\Cnt(u) := \Cnt(u) + 1$.
\item 
For each $u \in S$, 
if $\Cnt(u) + \noise(u) + \noisep > \thresh$ where 
$\noisep := \Geom(e^{\epsilon_2})$ denotes a fresh noise:\\
 \tcp{This line
is executed efficiently using the same data structure as mentioned in 
Section~\ref{sec:mainalg}.}
\begin{itemize}[leftmargin=5mm]
\item 
Input $\Cnt(u)$ to $\PSum(u)$;
\item 
Based on the updated value $D(u) - \PSum(u)$, relocate $u$ 
to a new bucket if necessary;
\item 
Reset $\Cnt(u) := 0$, and 
resample a fresh $\noise(u) := \Geom(e^{\epsilon_2})$.
\end{itemize}
\end{enumerate}
\RETURN $S^*$ and 
$d^* := \min\left(\frac{E(S^*) + \Geom(\exp({\epsilon'}))}{|S^*|}, |S^*|\right)$.
 \end{algorithmic}
\caption{Differentially Private Densest Subgraph -  Linear-Time Variant.}
\label{alg:linear}
\end{algorithm}

\begin{theorem}[Linear-time variant]
The above algorithm satisfies $\epsilon$-DP and 
moreover, it achieves 
 $(2, O(\frac{1}{\epsilon} \cdot \log^{2.5}n \cdot \log\frac1\sigma))$-approximation 
with probability $1-\sigma$.
The algorithm completes in time $O(n + m)$ with $1-\exp(-\Omega(m))$ probability
where $n = |V|$ and $m = |E|$.
\label{thm:linear}
\end{theorem}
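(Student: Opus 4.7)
I would reduce to Theorem~\ref{thm:dp} by post-processing. Algorithm~\ref{alg:linear} samples exactly the same random primitives as Algorithm~\ref{alg:quasi}---the noisy degrees $D(v)$, the prefix-sum instances $\PSum(v)$, the threshold noises $\noise(\cdot)$ and $\noisep$, and the final geometric noise on $|E(S^*)|$---and every downstream choice (which vertex to remove next, when to flush, which bucket to relocate a vertex into) is a deterministic function of these primitives, once we fix a canonical tie-breaking rule inside each bucket (say, the head of the linked list). Hence the $\epsilon$-DP transcript covered by Theorem~\ref{thm:dp} post-processes to the transcript of Algorithm~\ref{alg:linear}, and $\epsilon$-DP is inherited.

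\textbf{Utility.} I would condition on the same $(1-\sigma)$-probability good event used in the proof of Theorem~\ref{thm:util}: by Theorem~\ref{thm:psum} and a union bound over $v$ and $t$, every $\PSum(v)$ output is accurate to $O(\frac{1}{\epsilon}\log^{1.5}n\log\frac{n}{\sigma})$; by Fact~\ref{fct:geommech}, every $|D(v)-\dgr(v)|$ is at most $O(\frac{1}{\epsilon}\log\frac{n}{\sigma})$; and every pending $\Cnt(u)$ stays at most $O(\thresh)$. Under this event, $D(v)-\PSum(v)$ tracks the true residual degree of $v$ in $S_t$ to within $\err=O(\frac{1}{\epsilon}\log^{2.5}n\log\frac{1}{\sigma})$ uniformly in $v$ and $t$. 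Because the algorithm removes a vertex from the smallest non-empty bucket (of width $\err$), the removed vertex has true residual degree within $O(\err)$ of the true minimum. Plugging this approximate-minimum guarantee into Charikar's density potential argument---verbatim to the proof of Theorem~\ref{thm:util}, only with a slightly larger additive slack---yields the claimed $(2,O(\err))$-approximation; the final geometric noise on $|E(S^*)|/|S^*|$ contributes only a subsumed $O(\frac{1}{\epsilon}\log\frac{1}{\sigma})$ error to the reported density.

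\textbf{Runtime.} Initialisation costs $O(n)$: sampling the noisy degrees, creating $n$ empty prefix-sum instances, and placing each vertex in its initial bucket. Per iteration, I would charge $O(1)$ to the chosen vertex $v$ for its removal, $O(\deg(v))$ to $v$'s incident edges for the $\Cnt$-increments and the next-flush updates in the $L[1..n]$ schedule of Section~\ref{sec:mainalg}, and $O(1)$ amortised per fired flush for the $\PSum$ update (Theorem~\ref{thm:psum}) and the bucket relocation. The bucket scan pointer retreats by at most one per iteration and is always in $[1,B]$ with $B=O(n/\err)=O(n)$, so its total forward motion is $O(B+n)=O(n)$. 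Summed, the total work is $O(n+m+U)$, where $U$ is the total number of flushes.

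\textbf{Main obstacle: bounding $U$.} Deterministically the total of $\Cnt$-\emph{increments} across all vertices is $2m$, since each edge $(u,v)$ contributes exactly one increment on each endpoint when the other endpoint is removed; it therefore suffices to show each flush absorbs $\Omega(\thresh)$ of this mass outside of a rare tail event. A flush requires $\Cnt(u)+\noise(u)+\noisep>\thresh$, so I would split flushes into \emph{bulk} ones (with $\Cnt(u)\geq\thresh/2$), of which there are at most $4m/\thresh$ deterministically, and \emph{spurious} ones (with $\Cnt(u)<\thresh/2$), for which either $\noise(u)$ or $\noisep$ must exceed $\thresh/4$. Both noises are geometric with parameter $e^{\epsilon_2}$, so each such tail event has probability $n^{-\Omega(C)}$ per draw, with $C$ the hidden constant in $\thresh$; noting that $\noisep$ is freshly drawn at each of the at most $n^2$ checks while $\noise$ is drawn once per epoch (at most $n^2$ epochs), a Chernoff bound over these independent draws gives that the number of spurious flushes is at most $m$ except with probability $\exp(-\Omega(m))$. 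Combining, $U\leq 4m/\thresh+m=O(m)$, and the total runtime is $O(n+m)$ with probability $1-\exp(-\Omega(m))$, as required.
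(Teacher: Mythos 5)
Your post-processing reduction does not work, and this is a genuine gap. The linear-time algorithm's transcript is \emph{not} a deterministic function of the quasilinear algorithm's transcript: even with the same random primitives, the two algorithms remove vertices in different orders (one picks the exact $\arg\min_v (D(v)-\PSum(v))$, the other picks an arbitrary element of the smallest non-empty bucket found by a localized scan). Once the removal orders diverge, the $\Cnt(u)$ increments diverge, and the inputs submitted to each $\PSum(u)$ diverge; consequently the linear-time flush times and $\PSum$ outputs are \emph{not} reconstructible from the quasilinear transcript $\big(\{D(v)\},\{U_t,\Sigma_t(u)\}\big)$. A post-processing argument requires the downstream function to depend on the data only through the released DP object, which fails here because the linear-time transcript depends on $G$ in ways not captured by the quasilinear transcript. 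What the paper actually does is \emph{re-run} the proof of Theorem~\ref{thm:dp}: it fixes a trace of the \emph{linear-time} algorithm, observes that (with a canonical tie-break) that trace still determines the removal order, and then applies the same three-way decomposition---geometric mechanism for $\{D(v)\}$, the sparse-vector argument for $\{U_t\}$, and the adaptive DP of the prefix-sum mechanism for $\{\Sigma_t(u)\}$---verbatim. You should replace your post-processing claim with this re-instantiation argument.

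\textbf{On utility.} Your argument is essentially the paper's: condition on the good event under which $D(v)-\PSum(v)$ tracks $|E(v,S_t)|$ to within $\err$, then feed the approximate minimum into Charikar's orientation argument (Lemma~\ref{lem:dmax}). One point you should make explicit: the scan in Step 1(a) starts at $\idx'-1$, not at bucket $1$, so you additionally need the observation (made in the paper) that each $\PSum(u)$ update increases $\PSum(u)$ by at most $\err/2$ with high probability, so a vertex moves left by at most one bucket per time step and the localized scan indeed finds the current minimum bucket. You state the one-step retreat bound in your runtime paragraph but it is load-bearing for utility too.

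\textbf{On runtime.} Your accounting is correct in spirit and a bit more detailed than the paper's, which instead pads the execution to exactly $4m$ $\PSum$ updates, notes that each independently has probability $\leq\sigma/n$ of being a ``noise-only'' trigger, and applies one Chernoff bound to deduce that at most $3m$ of the $4m$ updates are spurious, forcing the algorithm to have terminated. Your split into ``bulk'' vs.\ ``spurious'' flushes reaches the same conclusion. One small slip: each edge $(u,v)$ contributes exactly \emph{one} $\Cnt$-increment (when the first endpoint is removed, the other is still in $S$; when the second is removed, the first already left), so the deterministic total is $m$, not $2m$. This only changes your constants.
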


\ignore{
For each time step $t$, do the following:
\begin{itemize}
\item 
Suppose that in the last time step $t-1$, a vertex is removed from bucket $i$.
Then, in the present time step $t$,  
check if buckets $i-1$, $i$, $i+1$ are empty, and pick
the smallest one that is not empty. 
Remove an arbitrary vertex denoted $v_t$ from this bucket, and $v_t$
will be the next vertex to remove from the residual graph $S$.

\item 
Update the outstanding counters $\Cnt(u)$ 
\end{itemize}
}

\section{Empirical Results}
\label{sec:exp}
In this section, we analyze our differentially private algorithm experimentally on real-world datasets and compare it to Charikar's algorithm and other DP algorithms for the densest subgraph problem. In our experiments, We use Charikar's algorithm as a non-differential private baseline. Table \ref{tab:stats} shows the 6 different networks that we used to evaluate the performance of our algorithm. For a graph $G$, let $S_c$ be the subgraph returned by Charikar's algorithm. We measure the accuracy of a DP algorithm as $\frac{\ds_G(S^*)}{\ds_G(S_c)}$ where $S^*$ is the subgraph returned by the DP algorithm. In other words, we measure the accuracy of the algorithms based on their relative performance in comparison to Charikar's algorithm.

\begin{table}
\centering
\begin{tabular}{ l c c c l }
\textbf{Network} & \textbf{Nodes} & \textbf{Edges} & \textbf{Baseline Solution} & \textbf{Description} \\
ca-Astro \cite{snapnets}& 18771 & 198050 & 29.554 & Collaboration net. arXiv Astro Phy. \\
ca-GrQc \cite{snapnets}& 5241 & 14484 & 22.3913 & Collaboration net. arXiv General Rel. \\
musae\underline{ }DE \cite{snapnets}& 9498 & 153138 & 39.0157 & Social net. of Twitch (DE) \\
musae\underline{ }ENGB \cite{snapnets}& 7126 & 35324 & 11.9295 & Social net. of Twitch (GB) \\
socfb-Amherst41 \cite{nr} & 2235 & 90954 & 54.5489 & Social net. of Facebook (Amherst) \\
socfb-Auburn71 \cite{nr} & 18448 & 973918 &  84.9551 & Social net. of Facebook (Auburn)
\end{tabular}
\caption{\label{tab:stats}
General statistics of networks}
\end{table}

Figure \ref{fig:accuracy} shows the accuracy of our linear-time algorithm as well as the accuracy of the $(\epsilon, \delta)$-DP algorithm proposed by Nguyen and Vullikanti \cite{nguyen2021differentially}. In our experiments, we have set the parameter $\sigma$  equal to $2^{-30} \approx 10^{-9}$ for our algorithm. We have also set $\delta = 10^{-9}$ for the $(\epsilon, \delta)$-DP algorithm proposed by Nguyen and Vullikanti. As it is shown in the figure, our proposed algorithm finds a much more accurate solution when $\epsilon$ is small. However, it achieves almost the same accuracy for large choices of $\epsilon$. Note that we have used $\textsc{SEQDENSEDP}$ algorithm to measure the accuracy of the work of Nguyen and Vullikanti which has better accuracy than their parallel algorithms \cite{nguyen2021differentially}.

We also study the running time linear time algorithm and compare it with the running time of $\textsc{SEQDENSEDP}$ algorithm proposed by Nguyen and Vullikanti. Table \ref{tab:runningtime} shows the average running time of the algorithms where the average is taken over 100 trials. Our experimental results show that in the RAM model, our linear time algorithm is much faster than the $\textsc{SEQDENSEDP}$ algorithm. In fact, our linear time algorithm is around 100 times faster than the $(\epsilon,\delta)$-DP algorithm of Nguyen and Vullikanti.

\begin{table}
\centering
\begin{tabular}{ l c c }
\textbf{Network} & \textbf{Our work} & \textbf{Nguyen and Vullikanti} \\
ca-Astro & 0.069 & 12.48 \\
ca-GrQc & 0.003 & 0.96 \\
musae\underline{ }DE & 0.034 & 3.36  \\
musae\underline{ }ENGB & 0.007 & 1.8  \\
socfb-Amherst41 & $<$0.001 & 0.21  \\
socfb-Auburn71 & 0.137 & 12.24
\end{tabular}
\caption{\label{tab:runningtime} Running time of DP algorithms. All running times are reported in seconds.}
\end{table}


\bibliographystyle{plainurl}
\bibliography{refs,bibdiffpriv,crypto}

\newpage
\appendix
\section{Failed Na\"ive Approaches}
\label{sec:naive}
A standard technique for attaining differential privacy
is to add noise to the answer calibrated
to the global sensitivity of the function being computed~\cite{calibrating}.
We argue that this approach fails to give meaningful
utility since the densest subgraph problem has high global sensitivity. 

\paragraph{Reporting the densest subgraph has high global sensitivity.}
The densest subgraph problem has high global sensitivity if we require
that the algorithm outputs the set of vertices that form a dense subgraph. 
For example, one can easily construct a family of graphs $G$ 
with $n$ vertices, and satisfying the following properties:
\begin{itemize}[leftmargin=5mm]
\item 
$G$ contains two disjoint sets of vertices $S$ 
and $S'$ of densities $d$ and $d'$ respectively.
There are no edges between $S$ and $S'$.
\item 
The densest subgraph of $S$ is $S$ itself;
and the densest subgraph of $S'$
is $S'$ itself. Furthermore, $d <  d' < d + 1/|S'|$.
\end{itemize}
This means that 
the densest subgraph of $G$ is $S'$; however, if we remove one edge
from $S'$, the densest subgraph of the resulting graph would become $S$.
Therefore, the ordinary approach of perturbing the output 
with noise roughly proportional to global sensitivity~\cite{calibrating}
does not apply.

\paragraph{Randomized response gives poor utility.}
A na\"ive approach for solving
the DP densest subgraph problem  is to rely on 
randomized response~\cite{randomizedresponse,dpbook-dr,dpbook-salil} 
where we use $G = (V, E)$ to denote the original graph given as input:
\begin{enumerate}[itemsep=1pt]
\item 
First, generate a rerandomized graph $\widetilde{G}$ as follows:
for each $(i, j) \in V^2$ and $i < j$, flip the existence of the edge $(i, j)$
with probability $p := \frac{1}{1 + e^{\epsilon}}$. 
\item 
Then, run an exact densest subgraph algorithm on 
the rerandomized graph $\widetilde{G}$.
Output the densest subgraph $S^* \subseteq V$ found. 
\item 
Output $d^* := (|E(S^*)| + \Geom(e^\epsilon))/|S^*|$ as an estimate
of $d(G)$.  
\end{enumerate}


It is not hard to show that the above algorithm 
satisfies $2\epsilon$-DP, 
However, the algorithm fails to give a good approximation.
To understand why, consider the following example ---
henceforth we will use $d^G(\cdot)$ and
$d^{\widetilde{G}}(\cdot)$
to denote the density of a subset of vertices
in the original graph $G$ and the rerandomized graph $\widetilde{G}$, respectively.
Suppose that $\epsilon = 1$ and thus $p = \Theta(1)$. 
Consider some graph $G = (V, E)$, in which the true densest
subgraph $S$ is a clique of size $\sqrt{n}$, and therefore
$d(G) = O\big(\sqrt{n}\big)$.
Moreover,
suppose that all vertices not in $S$ do not have any edges in $G$.
One can show that for any fixed subset $U \subseteq V$ 
of size at least $\frac{n}{\log^2 n}$, 
with $1-o(1)$ probability, 
$d^{\widetilde{G}}(U) \in [p \cdot |U|/2, 2p \cdot |U|]$. 
Thus with $1-o(1)$ probability, 
the na\"ive algorithm 
will report a large subgraph $U^* \approx V$ containing  
almost all the vertices.
However, when $U^* \approx V$, the true density $d^G (U^*) = O(1)$ which is 
a $O(\sqrt{n})$ factor 
smaller than the true answer $d(G)$. 
In other words, the reported set $U^*$ is not a good approximation
of the densest subgraph of $G$.

\ignore{
One can show that with 
some small, constant probability, in the rerandomized
graph $\widetilde{G}$, 
the vertices 
$V\backslash S$
will have more edges 
than the vertices 
}

\section{Deferred Proofs for our Quasilinear-Time Scheme}
\subsection{Deferred Proofs of Differential Privacy}
\label{sec:dpproof}

Below, we prove Theorem~\ref{thm:dp}.

Fix two arbitrary 
neighboring graphs $G$ and $\widetilde{G}$ that differ in only one edge.
In the proof below, we use the notation $\Pr[\cdot]$ 
to denote the probability 
when $G$ is used as the input, 
and we 
the notation $\widetilde{\Pr}[\cdot]$ to denote the probability 
when $\widetilde{G}$
is used as input.
In our algorithm, every time a vertex is removed from the residual set $S$,
we call it a time step denoted $t$.
Let $U_t$ be the set of vertices whose $\PSum$ instance is updated during time step $t$.
For a vertex $u \in U_t$, let $\Sigma_t(u)$
be the new output of $\PSum(u)$ after the update.
\elaine{
Observe that 
$\Cnt^G_t(u)$ is uniquely determined  
by XXX}
Henceforth, we use the notation $\prefix_t^u$ to denote 
the time steps up to $t$ (inclusive) in which $\PSum(u)$ is updated, 
the increment passed as input to $\PSum(u)$ during each of these updates,
and the new outcome of $\PSum(u)$ after each update.
We use $I_u$ to denote the time steps in which $u$ is updated.

\ignore{j
Observe that given 
$\{D(v)\}_{v \in V}$ and $(t_1, u_1, C_1, \Sigma_1), 
\ldots$, $(t_{j-1}, u_{j-1}, C_{j-1}, \Sigma_{j-1}), 
t_j, C_j, u_j$, 
the following are uniquely determined:
\begin{itemize}[leftmargin=5mm,itemsep=1pt]
\item 
the sequence of vertices removed from $S$ 
after the $(j-1)$-th \PSum update and till the $j$-th \PSum update; 
\item the next increment $\Cnt(u)$ passed as input to $\PSum(u)$ 
during the $j$-th \PSum update.
\end{itemize}
}

For an execution of the algorithm, we define the trace as $\{D(v)\}_{v \in V}, \{U_{t}, \{\Sigma_{t}(u)\}_{u \in U_t}\}_t$, where $t$ is different time steps of the algorithm. Note that given the trace, one can uniquely determine the sequence of the vertices removed in the densest subgraph algorithm. Throughout the proof, we fix an arbitrary trace $\tr = \big(\{D(v)\}_{v \in V}, \{U_{t}, \{\Sigma_{t}(u)\}_{u \in U_t}\}_t\big)$, and we show that $e^{-\epsilon} \cdot \widetilde{\Pr}[\tr] \le \Pr[\tr] \le e^{\epsilon} \cdot \widetilde{\Pr}[\tr]$, hence the algorithm is $\epsilon$-DP.

First, consider
when $G$ is the input.
Let $\prefix_0 := \{D(v)\}_{v \in V}$ and for $t > 0$, let
$\prefix_t := (\prefix_{t-1}, U_{t}, \{\Sigma_{t}(u)\}_{u \in U_{t}})$. 
\ignore{
Further, define $\Cnt^G_j(v) := \Cnt^G_j(\prefix_{j-1}, t_j, v)$ to be 
the $\Cnt(v)$ value at the time of the $j$-th \PSum update. 
Let $I^u := \{j : u_j = u\}$ be the set of indices 
that invoked an update to $\PSum(u)$, 
and let $I^{u}_j := \{k : u_k = u, k \leq j \}$.
We use $\prefix_j^u$ to denote the following event: 
\[
\text{Upon receiving the inputs $\{\Cnt^G_k(u_k)\}_{k \in I^{u}_j}$, 
$\PSum(u)$ outputs $\{\Sigma_k\}_{k \in I^{u}_j}$}
\]
}
We have the following: 
\begin{align*}
& 
\Pr[\{D(v)\}_{v \in V}, \{U_{t}, \{\Sigma_{t}(u)\}_{u \in U_t}\}_t]
\\ 
= & \Pr[\prefix_0] \cdot  
\prod_{t > 0} \bigg(
\Pr[U_{t} | \prefix_{t - 1}] \cdot  \\
&\ \  \quad \qquad \cdot \prod_{u \in U_{t}}
\Pr\left[\PSum(u) \text{ outputs } \Sigma_{t}(u) \text{ on new input } \Cnt^G_{t}(u)
\left| (\prefix_{t-1}, U_t)\right.
\right] \bigg)  \\
= & \Pr[\prefix_0] \cdot  
\prod_{t > 0} \bigg(
\Pr[U_t | \prefix_{t - 1}] \\ 
& \ \  \qquad \quad \cdot 
\prod_{u \in U_t}
\Pr\left[\PSum(u) \text{ outputs } \Sigma_t(u) \text{ on new input } \Cnt^G_t(u)
\left| \prefix_{t-1}^u \right.
\right] \bigg) \\
= & 
\Pr[\prefix_0] \cdot  
\prod_{t > 0} \Pr[U_t | \prefix_{t - 1}]
\cdot 
\prod_{u \in V} \Pr\left[\PSum(u) 
\text{ outputs } \{\Sigma_t(u)\}_{t \in I_u}
\text{ on updates } \{\Cnt^G_t(u)\}_{t \in I_u}
\right]
\end{align*}

Similarly, we have that 
\begin{align*}
& \widetilde{\Pr}[\{D(v)\}_{v \in V}, 
\left\{U_t,  \{\Sigma_t(u)\}_{u \in U_t}\right\}_{t}]
\\
=  & 
\widetilde{\Pr}[\prefix_0] \cdot
\prod_{t > 0} \widetilde{\Pr}[U_t | \prefix_{t - 1}]
\cdot
\prod_{u \in V} \widetilde{\Pr}\left[\PSum(u)
\text{ outputs } \{\Sigma_t(u)\}_{t \in I_u}
\text{ on updates } \{\Cnt^{\widetilde{G}}_t(u)\}_{t \in I_u}
\right]
\end{align*}

\begin{claim}
$e^{-\epsilon_0} \cdot {\Pr[\prefix_0]} \leq 
{\widetilde{\Pr}[\prefix_0]}
\leq e^{\epsilon_0} \cdot {\Pr[\prefix_0]}$
\label{clm:eps0}
\end{claim}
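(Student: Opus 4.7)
The plan is to observe that $\prefix_0 = \{D(v)\}_{v \in V}$ is just the output of the initial noisy-degree release step (Line~\ref{step:noisydeg}), so we only need to analyze the privacy of this one building block. Since the independent geometric noises $\Geom(e^{\epsilon_0/2})$ are drawn independently across vertices, the joint density factorizes, and we can bound the ratio coordinate by coordinate.

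First I would pin down the ``neighboring'' edge: let $(a,b)$ be the unique edge in which $G$ and $\widetilde{G}$ differ. Then $\deg_G(v) = \deg_{\widetilde{G}}(v)$ for every $v \in V \setminus \{a,b\}$, while $|\deg_G(v) - \deg_{\widetilde{G}}(v)| = 1$ for $v \in \{a,b\}$. Writing $\prefix_0 = (d_v)_{v \in V}$ for a fixed target vector, independence gives
\[
\frac{\Pr[\prefix_0]}{\widetilde{\Pr}[\prefix_0]}
\;=\;
\prod_{v \in V}
\frac{\Pr[\Geom(e^{\epsilon_0/2}) = d_v - \deg_G(v)]}{\Pr[\Geom(e^{\epsilon_0/2}) = d_v - \deg_{\widetilde{G}}(v)]}.
\]
All factors corresponding to $v \notin \{a,b\}$ are exactly $1$, so only the two factors for $v \in \{a,b\}$ contribute.

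Next I would apply the standard pointwise bound on the symmetric geometric PMF: for $\gamma := e^{\epsilon_0/2}$, the density at $k$ is $\tfrac{\gamma-1}{\gamma+1}\gamma^{-|k|}$, so for any two integers $k,k'$ with $|k - k'| \leq 1$ we have $\big| |k| - |k'| \big| \leq 1$ and hence the ratio of PMF values is at most $\gamma = e^{\epsilon_0/2}$ (and at least $e^{-\epsilon_0/2}$). Applying this with $k = d_v - \deg_G(v)$ and $k' = d_v - \deg_{\widetilde{G}}(v)$ for each of the two affected vertices, each of the two surviving factors lies in $[e^{-\epsilon_0/2}, e^{\epsilon_0/2}]$. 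Multiplying the two bounds yields
\[
e^{-\epsilon_0} \;\leq\; \frac{\Pr[\prefix_0]}{\widetilde{\Pr}[\prefix_0]} \;\leq\; e^{\epsilon_0},
\]
which is exactly the claim. This is really just the standard Geometric-mechanism guarantee of Fact~\ref{fct:geommech} applied to the vector-valued function $v \mapsto \deg(v)$ whose $\ell_1$-sensitivity is $2$; there is no genuine obstacle, and the main subtlety is just bookkeeping the factor of $2$ from the fact that flipping one edge perturbs the degrees of \emph{both} its endpoints.
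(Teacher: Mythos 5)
Your proof is correct and follows the same route as the paper: both observe that flipping one edge changes at most two degrees (each by $1$), and then invoke the pointwise guarantee of the geometric mechanism (Fact~\ref{fct:geommech}) to obtain a factor of $e^{\pm\epsilon_0/2}$ per affected coordinate, hence $e^{\pm\epsilon_0}$ overall. The paper states this in one sentence; your version merely spells out the factorization and the PMF ratio bound explicitly.
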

\begin{proof}
Observe that $G$ and $\widetilde{G}$ differ in 
at most one edge, and the (non)-existence of every 
edge affects the degree of at most two vertices. The claim therefore
follows from 
Fact~\ref{fct:geommech}.
\ignore{\[
e^{-\epsilon_0} \cdot {\Pr[\prefix_0]} \leq 
{\widetilde{\Pr}[\prefix_0]}
\leq e^{\epsilon_0} \cdot {\Pr[\prefix_0]}
\]}
\end{proof}

\begin{claim}
\begin{align*}
e^{-\epsilon_2} \cdot 
\prod_{t > 0} {\Pr}[U_t| \prefix_{t - 1}]
& \leq 
\prod_{t > 0} \widetilde{\Pr}[U_t | \prefix_{t - 1}]
\leq 
e^{\epsilon_2} \cdot \prod_{t > 0} {\Pr}[U_t| \prefix_{t - 1}]
\end{align*}
\label{clm:eps2}
\end{claim}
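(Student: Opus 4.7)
My plan is to reduce this claim to a per-vertex application of the sparse-vector technique, as hinted in the technical overview. First, I would exploit conditional independence: given $\prefix_{t-1}$, the indicators $\{\mathbb{1}(u \in U_t)\}_{u \in S_t}$ are mutually independent, since each uses the vertex-local persistent noise $\noise(u)$ and a freshly drawn $\noisep$, both of which are sampled independently across vertices. This lets me rearrange
\[
\prod_{t > 0} \Pr[U_t \mid \prefix_{t-1}] \;=\; \prod_{u \in V} \Pr[\mathcal{E}_u],
\]
where $\mathcal{E}_u$ denotes the observed sequence of update/no-update outcomes for $u$ over the time steps in which $u$ is residual. The analogous factorization holds for $\widetilde{\Pr}$.

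Next, I would observe that for every vertex $u$ not incident to the differing edge between $G$ and $\widetilde{G}$, the sequence of increments delivered to $\Cnt(u)$ is identical in both executions when we condition on the same $\prefix_{t-1}$: $u$'s neighbor set in both graphs agrees, and the removal order of those neighbors is pinned down by the trace via the determinism of Lines (4a)--(4c). Hence $\Pr[\mathcal{E}_u] = \widetilde{\Pr}[\mathcal{E}_u]$ for every such $u$, and the corresponding factors cancel from the ratio. Only the two endpoints $u_1, u_2$ of the differing edge can contribute.

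For each such endpoint $u$, the key structural point is that $\Cnt^G(u)_t$ and $\Cnt^{\widetilde G}(u)_t$ differ by at most $1$, and only during a single ``period'' between two consecutive resets of $\noise(u)$: the one that contains the time step at which the other endpoint is removed from $S$. Before that moment the two counts agree, and at the next update the count returns to $0$ in both executions so they agree again thereafter. Within that one sensitive period the test $\Cnt(u) + \noise(u) + \noisep > \thresh$ precisely instantiates the sparse-vector setup: a single persistent draw $\noise(u) \sim \Geom(e^{\epsilon_2})$ against a fixed threshold $\thresh$, a stream of queries of sensitivity $1$, independent per-query noise $\noisep \sim \Geom(e^{\epsilon_2})$, and at most one ``above'' outcome (the one that terminates the period). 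Invoking the SVT privacy guarantee on this stream for each of $u_1, u_2$ and composing across the two yields
\[
e^{-\epsilon_2} \cdot \Pr[\mathcal{E}_{u_1}] \cdot \Pr[\mathcal{E}_{u_2}] \;\leq\; \widetilde{\Pr}[\mathcal{E}_{u_1}] \cdot \widetilde{\Pr}[\mathcal{E}_{u_2}] \;\leq\; e^{\epsilon_2} \cdot \Pr[\mathcal{E}_{u_1}] \cdot \Pr[\mathcal{E}_{u_2}],
\]
which, combined with the cancellation for all other vertices, gives the claim.

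I expect the main obstacle to lie in two calibration steps: (i) rigorously justifying that the non-sensitive periods (in which $\Cnt^G(u)=\Cnt^{\widetilde G}(u)$ throughout) contribute exactly the same factor to both products, so no privacy is spent on them despite the many threshold checks they entail, and (ii) showing that SVT with geometric noise parameter $e^{\epsilon_2}$ on both the persistent and per-query perturbations composes across the two endpoints into an overall loss of $\epsilon_2$ rather than $2\epsilon_2$. If per-endpoint accounting proves too loose, a natural fallback is to merge $u_1$ and $u_2$ into one combined SVT stream of sensitivity $1$ (using the fact that after the other endpoint's removal at most one of the two endpoints has a nonzero count difference at any given time), reducing the claim to a single SVT invocation.
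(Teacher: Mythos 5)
Your proposal tracks the paper's argument closely: the same per-vertex factorization of $\prod_{t>0}\Pr[U_t\mid\prefix_{t-1}]$ via conditional independence given the trace, the same observation that vertices away from the differing edge contribute identical factors to both products and hence cancel, and the same reduction of the remaining factor to the sparse-vector technique with the persistent noise $\noise(u)$ serving as the one-shot threshold perturbation and $\noisep$ as the per-query perturbation over a single sensitive period. This is the right approach and matches the paper's proof strategy.

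That said, the step where you compose two SVT invocations, one for $u_1$ and one for $u_2$, to conclude a loss of $e^{\pm\epsilon_2}$ is, as you flag yourself in point (ii), a gap as written: sequential composition of two $\epsilon_2$-private SVT calls gives $2\epsilon_2$, not $\epsilon_2$. The clean resolution, which the paper uses, is that \emph{exactly one} endpoint ever contributes. WLOG say $i$ departs from $S$ before $j$. The edge $\{i,j\}$ can only affect $\Cnt(i)$ at the moment $j$ departs; but by then $i$ has already left $S$, so $\Cnt(i)$ evolves identically under $G$ and $\widetilde{G}$ for its entire lifetime, and the factor for $i$ cancels exactly with no privacy cost. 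Only the later-removed endpoint $j$ ever has a counter discrepancy, confined to the single period from $i$'s departure at $t'$ until $j$'s first subsequent $\PSum(j)$ update (Case 2) or $j$'s own departure (Case 1), and a single SVT invocation there yields the claimed $e^{\pm\epsilon_2}$. This is a strictly stronger statement than your fallback ``at most one endpoint has a nonzero count difference at any given time'': it is always the same endpoint, identified by the trace, and no merged two-vertex stream is needed.
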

\begin{proof}
Recall that two graphs $G$ and $\widetilde{G}$ differ in only one edge. Let $e = \{i,j\}$ be this edge, then we can assume w.l.o.g. that $e \in G$ and $e \notin \widetilde{G}$. Now consider the trace $\{D(v)\}_{v \in V}, \{U_{t}, \{\Sigma_{t}(u)\}_{u \in U_t}\}_t$. As we discussed earlier given the trace, we can uniquely determine the sequence of the vertices removed in Step \ref{step:loop} of the algorithm. Let $t'$ be the first time step that the algorithm removes one of the vertices $i$ or $j$. By symmetry, we can assume that this vertex is $i$, i.e., $i \in U_{t'}$.

Assuming that the algorithm has the same set of noisy degrees $\{D(v)\}_{v \in V}$ at the beginning for both graphs $G$ and $\widetilde{G}$, the algorithm does not see any difference between $G$ and $\tilde{G}$ until it reaches time $t'$. This is because for every vertex $v$ that the algorithm removes before time $t'$, this vertex has the same set of neighbors in both $G$ and $\widetilde{G}$. Therefore,
\begin{align}
\label{eq:c1s}
\prod_{0 < t < t'} {\Pr}[U_t| \prefix_{t - 1}] = \prod_{0 < t < t'} \widetilde{\Pr}[U_t| \prefix_{t - 1}] \,.
\end{align}
Note that the equality above follows from the fact that given $\prefix_0$, the algorithm has the same set of noisy degrees $\{D(v)\}_{v \in V}$ for both $G$ and $\widetilde{G}$.

Consider the time step $t'$ where the algorithm removes $i$ from $S$. When we run the algorithm on $G$, vertex $j$ is a neighbor of $i$. Thus, the algorithm increases the $\Cnt(j)$ in this time step. However, this is not the case in $\widetilde{G}$, and the $\Cnt(j)$ remains the same when we run the algorithm on $\widetilde{G}$. We consider two cases for the rest of the proof.
\begin{itemize}
\item \textbf{Case 1:} The first case is when $j \notin U_{t}$ for any $t \ge t'$. This means that the algorithm never updates the $\PSum(j)$ after the step $t'$. Let $t''$ be the time step where the algorithm removes $j$ from $S$ in Step \ref{step:loop} of the algorithm. It is easy to see that given the trace of the algorithm, we can uniquely determine $t''$. We then have,
\begin{align}
\label{eq:c1g}
\prod_{t' \le t \le t''} {\Pr}[U_t| \prefix_{t - 1}] &= \prod_{t' \le t \le t''} \bigg(\prod_{u \in U_t} {\Pr}[u \in U_t| \prefix_{t - 1}] \cdot  \prod_{u \notin U_t} {\Pr}[u \notin U_t| \prefix_{t - 1}]\bigg) \nonumber \\
&= \prod_{u \in V} \bigg(\prod_{t \in [t',t''] \cap I_u } {\Pr}[u \in U_t| \prefix_{t - 1}] \cdot  \prod_{t \in [t',t''] - I_u} {\Pr}[u \notin U_t| \prefix_{t - 1}]\bigg) \,.
\end{align}
Similarly, for the graph $\widetilde{G}$, we have
\begin{align}
\label{eq:c1gt}
\prod_{t' \le t \le t''} \widetilde{\Pr}[U_t| \prefix_{t - 1}] &= \prod_{u \in V} \bigg(\prod_{t \in [t',t''] \cap I_u } \widetilde{\Pr}[u \in U_t| \prefix_{t - 1}] \cdot  \prod_{t \in [t',t''] - I_u} \widetilde{\Pr}[u \notin U_t| \prefix_{t - 1}]\bigg) \,.
\end{align}
In order to complete the proof of this case, we first claim that for every vertex $u \neq j$, we have
\begin{align}
\label{eq:c1m}
&\prod_{t \in [t',t''] \cap I_u } {\Pr}[u \in U_t| \prefix_{t - 1}] \cdot  \prod_{t \in [t',t''] - I_u} {\Pr}[u \notin U_t| \prefix_{t - 1}] \nonumber \\
&=\prod_{t \in [t',t''] \cap I_u } \widetilde{\Pr}[u \in U_t| \prefix_{t - 1}] \cdot  \prod_{t \in [t',t''] - I_u} \widetilde{\Pr}[u \notin U_t| \prefix_{t - 1}] \,.
\end{align}
The claim clearly holds for every vertex $u \neq i,j$, since $u$ has exactly the same set of neighbors in both $G$ and $\widetilde{G}$. Therefore, given $\prefix_{t-1}$ we have $\Cnt^G_{t}(u) = \Cnt^{\widetilde{G}}_t(u)$, thus the claim holds. Also, considering the vertex $i$, the algorithm removes $i$ from $S$ at the time step $t'$, and it never updates the prefix sum for $i$ after that. Therefore ${\Pr}[i \in U_t| \prefix_{t - 1}] = \widetilde{\Pr}[i \in U_t| \prefix_{t - 1}]= 0$ for any $t \ge t'$. Similarly, we have ${\Pr}[i \notin U_t| \prefix_{t - 1}] = \widetilde{\Pr}[i \notin U_t| \prefix_{t - 1}]= 1$ for any $t \ge t'$ which implies our claim for the vertex $i$. We now give the following bound for vertex $j$.

\begin{claim}
\label{clm:c1}
For  Case 1, let $p = \prod_{t \in [t',t''] \cap I_j } {\Pr}[j \in U_t| \prefix_{t - 1}] \cdot  \prod_{t \in [t',t''] - I_j} {\Pr}[j \notin U_t| \prefix_{t - 1}]$ and $\widetilde{p} = \prod_{t \in [t',t''] \cap I_j } \widetilde{\Pr}[j \in U_t| \prefix_{t - 1}] \cdot  \prod_{t \in [t',t''] - I_j} \widetilde{\Pr}[j \notin U_t| \prefix_{t - 1}]$,  we then have $e^{-\epsilon_2} \widetilde{p} \leq p \leq e^{\epsilon_2} \widetilde{p}$.
\end{claim}
\begin{proof}
Recall that we are assuming that $j \notin U_t$ for any $t\ge t'$. Therefore, $[t',t''] \cap I_j = \emptyset$, and we have
\begin{align*}
\prod_{t \in [t',t''] \cap I_j } {\Pr}[j \in U_t| \prefix_{t - 1}] \cdot  \prod_{t \in [t',t''] - I_j} {\Pr}[j \notin U_t| \prefix_{t - 1}] 
=  {\Pr}[j \notin \cup_{t' \le t \le t''} U_t | \prefix_{t' - 1}] \,.
\end{align*}
Similarly, we have
\begin{align*}
\prod_{t \in [t',t''] \cap I_j } \widetilde{\Pr}[j \in U_t| \prefix_{t - 1}] \cdot  \prod_{t \in [t',t''] - I_j} \widetilde{\Pr}[j \notin U_t| \prefix_{t - 1}] =  
\widetilde{\Pr}[j \notin \cup_{t' \le t \le t''} U_t | \prefix_{t' - 1}] \,.
\end{align*}

 Now considering the vertex $j$, 
the expression 
${\Pr}[j \notin \cup_{t' \le t \le t''} U_t | \prefix_{t' - 1}]$
(or $\widetilde{\Pr}[j \notin \cup_{t' \le t \le t''} U_t | \prefix_{t' - 1}]$)
can be equivalently thought as the following.
For every time step $t' \le t \le t''$, compute the 
noisy counter $\Cnt(j) + \noisep$ and compare it with 
the noisy threshold $\thresh - \noise(j)$,
where $\noisep$ is a fresh random noise, and
 $\noise(j)$ was chosen the last time $\Cnt(j)$ was reset to 0.
We want to know what is the probability that 
for all of $t' \le t \le t''$, $\Cnt(j) + \noisep$
never exceeds the threshold $\thresh - \noise(j)$.
This random process is identical to 
the sparse vector algorithm (see page 57, Algorithm 1 of \cite{dpbook-dr}),
applied to the following database and sequence of queries,
with the privacy budget $\epsilon_2$.
Specifically, the database here is a sequence of boolean values 
that represent whether $j$ is connected to the vertex being removed
in steps $t \in [t', t'']$.
The sequence of queries is whether the prefix sum of the database
in each time step exceeds ${\sf T}$. 
Note that for $G$ and $\widetilde{G}$ the two databases defined
above differ only in one position, i.e., the bit in time step $t'$ 
when $i$ is removed.
Moreover, all the prefix sums have sensitity $1$. 
The proof of the sparse vector technique immediately gives us
the following (see Theorem 3.23 of \cite{dpbook-dr}):
\begin{align*}
e^{-\epsilon_2} \cdot \widetilde{\Pr}[j \notin \cup_{t' \le t \le t''} U_t | \prefix_{t' - 1}] \leq 
{\Pr}&[j \notin \cup_{t' \le t \le t''} U_t | \prefix_{t' - 1}]\\
&\leq e^{\epsilon_2}
\cdot \widetilde{\Pr}[j \notin \cup_{t' \le t \le t''} U_t | \prefix_{t' - 1}] \,.
\end{align*}
\end{proof}

Considering any time step $t > t''$, the algorithm has removed both vertices $i$ and $j$, and it does not see any difference between $G$ and $\widetilde{G}$. Thus,
\begin{align*}
\prod_{t'' < t } {\Pr}[U_t| \prefix_{t - 1}] = \prod_{t'' < t } \widetilde{\Pr}[U_t| \prefix_{t - 1}] \,.
\end{align*}
The equality above along with equalities (\ref{eq:c1s}), (\ref{eq:c1m}) and also Claim \ref{clm:c1} implies Claim \ref{clm:eps2} for this case.

\item \textbf{Case 2:} The second case is when $j \in U_t$ for some $t\ge t'$. Let $t'' \ge t'$ be the smallest index such that $j \in U_{t''}$. 
It is easy to verify that equations (\ref{eq:c1g}) and (\ref{eq:c1gt}) still hold in Case 2. Also, Equation (\ref{eq:c1m}) holds for every vertex $u \neq j$. We now give the following bound for vertex $j$.

\begin{claim}
\label{clm:c2}
For  Case 2, let $p = \prod_{t \in [t',t''] \cap I_j } {\Pr}[j \in U_t| \prefix_{t - 1}] \cdot  \prod_{t \in [t',t''] - I_j} {\Pr}[j \notin U_t| \prefix_{t - 1}]$ and $\widetilde{p} = \prod_{t \in [t',t''] \cap I_j } \widetilde{\Pr}[j \in U_t| \prefix_{t - 1}] \cdot  \prod_{t \in [t',t''] - I_j} \widetilde{\Pr}[j \notin U_t| \prefix_{t - 1}]$,  we then have $e^{-\epsilon_2} \widetilde{p} \leq p \leq e^{\epsilon_2} \widetilde{p}$.
\end{claim}
\begin{proof}
Recall that we are assuming that $j \in U_{t''}$ and $j \notin U_t$ for any $t' \le t < t''$. Therefore, $[t',t''] \cap I_j = \{t''\}$, and we have
\begin{align*}
&\prod_{t \in [t',t''] \cap I_j } {\Pr}[j \in U_t| \prefix_{t - 1}] \cdot  \prod_{t \in [t',t''] - I_j} {\Pr}[j \notin U_t| \prefix_{t - 1}]\\
& =  {\Pr}[j \in U_{t''}| \prefix_{t'' - 1}] 
\cdot 
{\Pr}[j \notin \cup_{t' \le t < t''} U_t| \prefix_{t' - 1}] 
\quad \ \  (\text{henceforth denoted } {p})
\,.
\end{align*}
Similarly, we have
\begin{align*}
&\prod_{t \in [t',t''] \cap I_j } \widetilde{\Pr}[j \in U_t| \prefix_{t - 1}] \cdot  \prod_{t \in [t',t''] - I_j} \widetilde{\Pr}[j \notin U_t| \prefix_{t - 1}] \\
& =  \widetilde{\Pr}[j \in U_{t''}| \prefix_{t'' - 1}]
\cdot
\widetilde{\Pr}[j \notin \cup_{t' \le t < t''} U_t| \prefix_{t' - 1}] 
\quad \ \  (\text{henceforth denoted } \widetilde{p})
\,.
\end{align*}

The expressions 
$p$ or $\widetilde{p}$
are equivalent 
to the following.
For every time step $t' \le t \le t''$, the algorithm computes the noisy counter $\Cnt(j) + \noisep$ and compares it to the noisy threshold $\thresh - \noise(j)$. 
We want to know what is the probability that 
for all of $t' \le t < t''$, 
$\Cnt(j) + \noisep$ does not exceed threshold $\thresh - \noise(j)$,
but finally in time $t''$, 
$\Cnt(j) + \noisep$ indeed exceeds threshold $\thresh - \noise(j)$.
Similar to what we discussed in Claim \ref{clm:c1}, 
we can directly apply the analysis of the sparse
vector technique here, 
and obtain that 
$ e^{-\epsilon_2} \cdot \widetilde{p}
\leq  p \leq e^{\epsilon_2} \cdot \widetilde{p}$.

\end{proof}

Considering any time step $t > t''$, the algorithm has removed vertices $i$ and $j$, and the induced subgraph between the vertices in $S$ is exactly the same for both graphs $G$ and $\widetilde{G}$. The algorithm also resets the counters $\Cnt^G(j)$ and $\Cnt^{\widetilde{G}}(j)$ at the time step $t''$. Furthermore, the $\PSum$ values are exactly the same for $G$ and $\widetilde{G}$ conditioning on $\prefix_{t''}$. Thus, the algorithm does not see any difference between $G$ and $\widetilde{G}$ after the time step $t''$ and we have
\begin{align*}
\prod_{t'' < t } {\Pr}[U_t| \prefix_{t - 1}] = \prod_{t'' < t } \widetilde{\Pr}[U_t| \prefix_{t - 1}] \,.
\end{align*}
The equality above along with equalities (\ref{eq:c1s}), (\ref{eq:c1m}) and also Claim \ref{clm:c2} implies Claim \ref{clm:eps2} for this case.
\end{itemize}
This completes the proof for both Case 1 and Case 2 and proves Claim \ref{clm:eps2}.
\end{proof}
\begin{claim}
Let 
${p} := 
\prod_{u \in V} {\Pr}\left[\PSum(u)
\text{ outputs } \{\Sigma_t(u)\}_{t \in I^u}
\text{ on updates } \{\Cnt^{{G}}_t(u)\}_{t \in I^u}
\right]$
and let 
$\widetilde{p} := 
\prod_{u \in V} \widetilde{\Pr}\left[\PSum(u)
\text{ outputs } \{\Sigma_t(u)\}_{t \in I^u}
\text{ on updates } \{\Cnt^{\widetilde{G}}_t(u)\}_{t \in I^u}
\right]$.
It must be that 
$e^{-\epsilon_1} p \leq \widetilde{p} \leq e^{\epsilon_1} p$.
\label{clm:eps1}
\end{claim}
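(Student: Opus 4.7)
The plan is to reduce the claim to the adaptive-DP property of a single $\PSum$ instance (Theorem~\ref{thm:psum}), exploiting the fact that $G$ and $\widetilde{G}$ differ in only one edge. Fix the edge of difference to be $\{i,j\}$. The key structural observation is that the input sequence fed to $\PSum(u)$ in our algorithm records (in a delayed, reset-based fashion) how many of $u$'s neighbors have been removed from $S$. Hence, for every vertex $u \notin \{i,j\}$, the neighborhood of $u$ is identical in $G$ and $\widetilde{G}$, and once we condition on the same trace $\tr$ (which fixes the removal order and the set $I^u$), the two sequences $\{\Cnt^{G}_t(u)\}_{t \in I^u}$ and $\{\Cnt^{\widetilde{G}}_t(u)\}_{t \in I^u}$ are equal. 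Therefore the corresponding factors in $p$ and $\widetilde{p}$ agree exactly for every such $u$.

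For the two vertices $i$ and $j$, by symmetry I may assume (using the fixed trace) that $i$ is removed strictly before $j$, at some time step $t'$. Then $I^i$ is contained entirely in $\{1,\ldots,t'-1\}$, and during those time steps no neighbor of $i$ that distinguishes $G$ from $\widetilde{G}$ (namely $j$) has yet been removed, so $\{\Cnt^{G}_t(i)\}_{t \in I^i} = \{\Cnt^{\widetilde{G}}_t(i)\}_{t \in I^i}$ and the factor for $u=i$ also matches. The only possibly non-trivial contribution comes from $u=j$: the removal of $i$ at time $t'$ increments $\Cnt(j)$ in the $G$-execution but not in the $\widetilde{G}$-execution. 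Thus, if $j$ is ever updated after time $t'$ (and before $j$ itself is removed), the value passed to $\PSum(j)$ at the earliest such update is larger by exactly $1$ in $G$ than in $\widetilde{G}$; after that update, $\Cnt(j)$ is reset to $0$ in both runs, and all subsequent increments come from removals of vertices that are neighbors of $j$ in both graphs, so the two sequences agree from that point on.

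Consequently, the two input sequences $\{\Cnt^{G}_t(j)\}_{t \in I^j}$ and $\{\Cnt^{\widetilde{G}}_t(j)\}_{t \in I^j}$ differ in at most one coordinate and by at most $1$; that is, they are \emph{neighboring} in the sense required by the adaptive-DP guarantee of the prefix-sum mechanism. Applying Theorem~\ref{thm:psum} to $\PSum(j)$---where the adaptive ``adversary'' is simply the rest of the algorithm, which chooses each next input to $\PSum(j)$ as a function of the previous outputs of $\PSum(j)$---yields
\[
e^{-\epsilon_1} \cdot \Pr\!\left[\PSum(j) \text{ outputs } \{\Sigma_t(j)\}_{t \in I^j} \text{ on } \{\Cnt^{G}_t(j)\}_{t \in I^j}\right]
\le \widetilde{\Pr}\!\left[\cdots \text{ on } \{\Cnt^{\widetilde{G}}_t(j)\}_{t \in I^j}\right]
\le e^{\epsilon_1} \cdot \Pr[\cdots].
\]
Multiplying this bound for $u=j$ with the equalities for every other $u$ produces $e^{-\epsilon_1} p \le \widetilde{p} \le e^{\epsilon_1} p$.

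The main subtlety I anticipate is justifying the reduction to a single-instance, adaptive-DP statement. Although the inputs to $\PSum(j)$ are determined by a complex interleaving that involves the outputs of the other $\PSum(u)$'s and the sparse-vector-style threshold queries (both of which have already been handled by Claims~\ref{clm:eps0} and~\ref{clm:eps2}), conditioning on the fixed trace $\tr$ pins down everything external to $\PSum(j)$'s internal coin tosses, so the adaptive-DP guarantee for neighboring input streams is precisely the tool that applies. No composition penalty across the $n$ distinct $\PSum(\cdot)$ instances is incurred, since only $\PSum(j)$ sees a non-identical input sequence, i.e., this is essentially parallel composition rather than sequential composition.
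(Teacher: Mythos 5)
Your proof is correct and takes essentially the same approach as the paper's: the paper also reduces the claim to the observation that the input sequences to the $\PSum$ instances differ in at most one $(u,t)$ pair by at most $1$, and then invokes the $\epsilon_1$-DP of the prefix-sum mechanism. Your proposal is somewhat more explicit—the paper simply asserts that at most one coordinate differs, whereas you carefully justify it by tracking what happens to $\Cnt(j)$ before $i$'s removal, at the first $\PSum(j)$ update after $i$'s removal, and after the reset—but the decomposition and the key lemma used are the same.
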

\begin{proof}
Since $G$ and $\widetilde{G}$ are neighboring, there is at most one 
$(u, t)$ pair where $t \in I_u$ 
such that $\Cnt^{\widetilde{G}}_t(u)$ and $\Cnt^{{G}}_t(u)$ differ by 1.
For all other $(u, t)$ pair where $t \in I_u$, 
$\Cnt^{\widetilde{G}}_t(u)$ and $\Cnt^{{G}}_t(u)$ must be the same. 
The claim therefore follows
from the $\epsilon_1$-DP of the prefix sum mechanism.

\paragraph{Proof of Theorem~\ref{thm:dp}.}
Let $S^*$ be the subgraph output by the algorithm. Observe that $S^*$ is uniquely determined by trace $\{D(v)\}_{v \in V}, 
\left\{U_t, \allowbreak \{\Sigma_t(u)\}_{u \in U_t}\right\}_{t}$.
Therefore, given Claims~\ref{clm:eps0}, \ref{clm:eps2}, and \ref{clm:eps1}, 
we conclude that for any $S^*$, 
$e^{-(\epsilon_0 + \epsilon_1 + \epsilon_2)} \cdot \Pr[S^*] 
\leq \widetilde{\Pr}[S^*]
\leq e^{\epsilon_0 + \epsilon_1 + \epsilon_2} \cdot \Pr[S^*]$.
Besides $S^*$, the algorithm also needs to output $d^*$. 
Since $G$ and $\widetilde{G}$ differ in at most one edge, 
and due to the distribution of the noise in computing $d^*$, 
it follows that for any $d^*$ and $S^*$, 
\[
e^{-\epsilon'} \Pr[d^* | S^*] \leq \widetilde{\Pr}[ d^* | S^*] \leq 
e^{\epsilon'} \Pr[d^* | S^*] 
\]
Therefore, 
for any $S^*$ and $d^*$, we have that
$e^{-(\epsilon_0 + \epsilon_1 + \epsilon_2 + \epsilon')} \cdot \Pr[S^*, d^*] 
\leq \widetilde{\Pr}[S^*, d^*]
\leq e^{\epsilon_0 + \epsilon_1 + \epsilon_2 + \epsilon'} \cdot \Pr[S^*, d^*]$.
\end{proof}

\subsection{Deferred Proofs of Utility}
\label{sec:utilproof}
In this section, we prove Theorem~\ref{thm:util}.

To prove Theorem~\ref{thm:util}, we will need to use the following lemma
proven by Charikar~\cite{charikardensestsubgraph}.
\begin{lemma}[Upper bound on $d(G)$~\cite{charikardensestsubgraph}]
Let $G := (V, E)$ be an undirected graph, and suppose that we arbitrarily 
assign an orientation to each edge. 
Let $d_{\rm max}$ be the maximum number of edges oriented towards any vertex.
Then, it must be that $d(G) \leq d_{\rm max}$.
\label{lem:dmax}
\end{lemma}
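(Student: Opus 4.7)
The plan is to prove this by a short averaging (pigeonhole) argument on the densest subgraph. I would first fix any $S^* \subseteq V$ achieving $d(G) = d(S^*) = |E(S^*)|/|S^*|$, and then focus on what the given orientation does when restricted to edges of $E(S^*)$. The key observation I would invoke is that every edge in $E(S^*)$ has both endpoints inside $S^*$, so however it is oriented, it contributes exactly one unit to the in-degree of some vertex of $S^*$. Letting $\In(v)$ denote the total number of edges oriented towards $v$ (counted over all of $E$, not just $E(S^*)$), this observation yields $\sum_{v \in S^*} \In(v) \geq |E(S^*)|$, since the edges of $E(S^*)$ alone already contribute $|E(S^*)|$ to this sum, and edges with exactly one endpoint outside $S^*$ but oriented into $S^*$ can only further increase it.

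From that inequality the conclusion falls out by averaging over the $|S^*|$ terms: some $v^* \in S^*$ must satisfy $\In(v^*) \geq |E(S^*)|/|S^*| = d(G)$, and since $d_{\max} = \max_{v \in V} \In(v) \geq \In(v^*)$, we obtain $d_{\max} \geq d(G)$ as required. I do not anticipate any real obstacle here; the only subtlety worth flagging is that the orientation is completely arbitrary (and could be adversarial), so the argument must go through uniformly for every orientation. This is precisely why the ``every induced edge contributes one to the in-degree of some vertex in $S^*$'' observation is stated in a way that is oblivious to how the orientation assigns directions — it holds for each edge individually, regardless of the choice.
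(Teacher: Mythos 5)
Your proof is correct, and it is precisely the standard averaging argument for this fact; the paper itself does not reprove the lemma (it simply cites Charikar), but your argument is exactly how it is established in Charikar's work. The two key observations — that every edge of $E(S^*)$ contributes one to the in-degree of some vertex \emph{inside} $S^*$ regardless of orientation, and then a pigeonhole step over the $|S^*|$ vertices — are both used correctly, so there is nothing to fix.
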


\begin{proof}{ of Theorem~\ref{thm:util}}

\begin{claim}
With $1- 0.1 \sigma$ probability, 
the following holds throughout the algorithm:
at the beginning of every time step, 
let $S$ be the residual set, and let $v \in S$; 
then, 
\[
\bigl| D(v) - \PSum(v) - |E(v, S)| \bigr| 
\leq O\left(\frac{1}{\epsilon} \cdot \log^{2.5} n \cdot \log\frac{1}{\sigma}\right) + \thresh.
\]
\label{clm:bounderr}
\end{claim}
\begin{proof}
Recall that $\epsilon_0 = \epsilon_1 = \epsilon_2 = \epsilon' = \epsilon/4$.
For a fixed $v \in V$, 
by the property of the $\Geom(e^{\epsilon_0/2})$ noise distribution in $D(v)$,  
with probability $1 - \frac{0.1\sigma}{2n}$, 
$|\deg(v) - D(v)| < \frac{C}{\epsilon} \cdot \log n 
\cdot \log \frac{1}{\sigma}$ for some appropriate constant $C$.
Taking the union bound over all $v \in V$, with probability 
$1- 0.1\sigma/2$, 
it holds that for all $v \in V$,
$|\deg(v) - D(v)| < \frac{C}{\epsilon} \cdot \log n
\cdot \log \frac{1}{\sigma}$.

By Theorem~\ref{thm:psum}, 
for any $v \in V$ and any fixed time step, 
with probability $1- \frac{0.1 \sigma}{2n^2}$, 
the error of $\PSum(v)$ for the fixed time step 
is upper bounded by 
$O(\frac{1}{\epsilon} \cdot \log^{2.5} n \cdot \log\frac{1}{\sigma})$.
Taking a union bound over all time steps, it must be that 
for any fixed $v$, with probability
$1-\frac{0.1 \sigma}{2n}$, 
the error of $\PSum(v)$ 
is upper bounded 
$O(\frac{1}{\epsilon} \cdot \log^{2.5} n \cdot \log\frac{1}{\sigma})$
in all time steps.
Taking a union bound over all vertices, it must be that 
with probability $1-\frac{0.1 \sigma}{2}$, the above holds for all vertices.

Recall that we do not update $\PSum(v)$ in every time step, only when 
$\Cnt(v) + \noise(v) + \noisep$ has exceeded the threshold \thresh.
Further, $E(v, V \backslash S)$ is equal to the true sum of all increments input 
to $\PSum(v)$ so far, plus $\Cnt(v)$.
For any fixed $\noise(v)$, with $1-0.1\sigma/n^3$ probability, 
$|\noise(v)| \leq O(\frac1\epsilon \log n \log \frac1\sigma)$.
The same holds for each fixed $\noisep$.
Therefore, with $1-0.1\sigma$ probability, it must be
that any noise $\noise(v)$ or $\noisep$ generated throughout  
the algorithm has magnitude at most $O(\frac1\epsilon \log n \log \frac1\sigma)$.
This means with $1-0.1\sigma$ probability, 
throughout the algorithm and for any $v$,  
the outstanding counter value $\Cnt(v)$ 
that has not been accumulated by $\PSum(v)$ 
cannot exceed $\thresh + O(\frac1\epsilon \log n \log \frac1\sigma)$.

Therefore, with probability $1-0.1\sigma$,
it must be that 
at the beginning of every time step, and for every $v \in S$
where $S$ is the current residual set --- henceforth,
we use ${\sf TruePSum(v)}$ to mean the true prefix sum 
of all inputs that have been sent to $\PSum(v)$:
\begin{align*}
& \bigl| D(v) - \PSum(v) - |E(v, S)| \bigr|\\
 = & 
\bigl| D(v) - \PSum(v) - (\deg(v) - |E(v, V \backslash S|) \bigr| \\
 \leq &
\bigl| D(v) - \deg(v)\bigr| + 
 \bigl||E(v, V \backslash S|) - \PSum(v)\bigr| \\
= &
\bigl| D(v) - \deg(v)\bigr| + 
 \bigl|   \Cnt(v) + {\sf TruePSum}(v) - \PSum(v)\bigr| \\
\leq &
\bigl| D(v) - \deg(v)\bigr| + 
 \bigl|   \Cnt(v) \bigr| + \bigl| {\sf TruePSum}(v) - \PSum(v)\bigr| \\
\leq  &
O\left(\frac{1}{\epsilon} \cdot \log n \cdot \log\frac1\sigma \right)
+ 
\thresh + 
O\left(\frac{1}{\epsilon} \cdot \log n \cdot \log\frac1\sigma \right)
+ 
O\left(\frac{1}{\epsilon} \cdot \log^{2.5} n \cdot \log\frac{1}{\sigma}\right)
\\
\leq & 
O\left(\frac{1}{\epsilon} \cdot \log^{2.5} n \cdot \log\frac{1}{\sigma}\right)
 + \thresh
\end{align*}

\end{proof}

\ignore{
\begin{claim}
With $1-\sigma$ probability, 
it holds that for all $v$, 
$|\psi(v) - \overline{\psi}(v)| \leq  
O(\frac{1}{\epsilon} \cdot \log^{2.5} n \cdot \log\frac{1}{\sigma})
$. 
\end{claim}
\begin{proof}
For any $v \in V$, 
$|\psi(v) - \overline{\psi}(v)| \leq  |\deg(v) - D(v)| + 
| E(v, V \backslash S) - \PSum(v)|$
where $S$ and $\PSum(v)$ denote the values of these variables 
at the time $v$ is being removed from $S$.

For a fixed $v \in V$, 
by the property of the $\Geom(e^{\epsilon_0/2})$ noise distribution in $D(v)$,  
with probability $1 - \frac{\sigma}{2n}$, 
$|\deg(v) - D(v)| < \frac{C}{\epsilon} \cdot \log n 
\cdot \log \frac{1}{\sigma}$ for some appropriate constant $C$.
Taking the union bound over all $v \in V$, with probability 
$1-\sigma/2$, 
it holds that for all $v \in V$,
$|\deg(v) - D(v)| < \frac{C}{\epsilon} \cdot \log n
\cdot \log \frac{1}{\sigma}$.

By Theorem~\ref{thm:psum}, 
for any $v \in V$ and any fixed time step, 
with probability $1- \frac{\sigma}{2n^2}$, 
the error of $\PSum(v)$ for the fixed time step 
is upper bounded by 
$O(\frac{1}{\epsilon} \cdot \log^{2.5} n \cdot \log\frac{1}{\sigma})$.
Taking a union bound over all time steps, it must be that with probability
$1-\frac{\sigma}{2n}$, 
the error of $\PSum(v)$ 
is upper bounded 
$O(\frac{1}{\epsilon} \cdot \log^{2.5} n \cdot \log\frac{1}{\sigma})$
in all time steps.
Taking union bound over the choice of $v$, it must be that
with probability $1-\sigma/2$, 
all $\PSum$ instances have error bounded by  
$O(\frac{1}{\epsilon} \cdot \log^{2.5} n \cdot \log\frac{1}{\sigma})$
in all time steps.

Therefore, with probability $1-\sigma$, 
$|\psi(v) - \overline{\psi}(v)| \leq  |\deg(v) - D(v)| +
| E(v, V \backslash S) - \PSum(v)| \leq 
O(\frac{1}{\epsilon} \cdot \log n \cdot \log\frac{1}{\sigma})
 + O(\frac{1}{\epsilon} \cdot \log^{2.5} n \cdot \log\frac{1}{\sigma})
= O(\frac{1}{\epsilon} \cdot \log^{2.5} n \cdot \log\frac{1}{\sigma})
$.
\elaine{the union bound loss can be bounded tighter}
\end{proof}
}


\begin{claim}
Consider some execution of our algorithm, 
and let 
and let $\err := \max_{t, v \in S_t} \bigl| D(v) - \PSum_t(v) - |E(v, S_t)| \bigr|$
where $S_t$ denotes the residual set at the beginning of time step $t$,
and $\PSum_t(v)$ denotes the output of $\PSum(v)$ at the beginning of time step $t$.
Then, $d(S^*) \geq (d(G) - 4 \cdot \err)/2$.
\label{clm:approxanalysis}
\end{claim}
\begin{proof}
Fix an arbitrary time step $t$, 
it holds that 
$\min_{v \in S_t} |E(v, S_t)| \leq 2 | E(S_t) |/|S_t|$. 
Let $v_t \in S_t$ be the actual vertex that is removed in time step $t$, 
i.e., 
$v_t := {\rm arg}\min_v(D(v)-\PSum_t(v))$.
It therefore holds that
$|E(v_t, S_t)| - \min_{v \in S_t} |E(v, S_t)| \leq 2 \cdot \err$.

Now, suppose that as a vertex $v_t$ gets removed from the residual graph
$S_t$, all edges $E(v_t, S_t)$ are oriented towards $v_t$.
Observe that the $d_{\rm max} := \max_t( D(v_t) - \PSum_t(v_t) )$
value at the end of the algorithm is a good estimate of 
$\max_t(E(v_t, S_t))$.
Specifically, 
$d_{\rm max} \geq \max_t(E(v_t, S_t)) - 2 \cdot \err$.
Henceforth let $t^* := {\rm arg}\max_t( D(v_t) - \PSum_t(v_t) )$,
and therefore, our algorithm's output $S^* := S_{t^*}$.

By Lemma~\ref{lem:dmax}, it holds that 
$d(G) \leq \max_t(E(v_t, S_t)) \leq d_{\rm max} + 2 \cdot \err 
= D(v_{t^*}) - \PSum_{t^*}(v_{t^*})
+ 2 \cdot \err \leq  
\min_{v \in S_{t^*}} |E(v, S_{t^*})| + 2 \cdot \err + 2 \cdot \err
= 2 \cdot \err \leq  
2|E(S_{t^*})|/|S_{t^*}|  + 4 \cdot \err
= 2|E(S^*)|/|S^*|  + 4 \cdot \err
$.
In other words, $d(S^*) \geq (d(G) - 4 \cdot \err)/2$.
\end{proof}

Finally, 
observe that by the definition of $d^*$ and due to Fact~\ref{fct:geommech}, 
with probability $1-0.1 \sigma$, 
$|d^* - d(S^*)| \leq O(\frac{1}{\epsilon}\log \frac{1}{\sigma})$.
Theorem~\ref{thm:util} now follows from 
this fact, as well as Claims~\ref{clm:bounderr}
and \ref{clm:approxanalysis}, and the choice of $\thresh$.
\elaine{note this affects choice of thresh}
\end{proof}

\section{Deferred Proofs for the Linear-Time Algorithm}
\label{sec:linearproof}

In this section, we prove Theorem~\ref{thm:linear}.
The DP proof is the same as Theorem~\ref{thm:dp}.
For the utility analysis, we may assume that whenever $\PSum(u)$ is updated
from $\Sigma$ to $\Sigma'$ for any vertex $u$, 
it holds that $\Sigma' - \Sigma < \err/2$ where $\err$ is the discretization
parameter used in the bucketing --- 
using the same type of arguments as the proof of Claim~\ref{clm:bounderr},
one can show that this indeed holds except with $0.1\sigma$ probability.
This means that no vertex $v$'s noisy residual degree $D(v) - \PSum(v)$ should
shrink by more than $\err/2 + 1$ in two adjacent time steps, i.e., 
whenever a vertex $v$ moves buckets, it cannot move left by more than 1 bucket.
Therefore, effectively, the only difference between this linear-time variant
and the earlier algorithm in Section~\ref{sec:mainalg}
is the following:
in this new variant, we do not necessarily pick
${\rm arg}\min_v(v, S_t)$  in every time step $t$. 
We could pick a vertex $v_t$
in time step $t$ such that $|v_t - {\rm arg}\min_v(v, S_t)| \leq \err$.
This introduces an additive $\err$ term in the proof
of Theorem~\ref{thm:util}.  
Due to the choice of $\err$, and redoing 
Theorem~\ref{thm:util} with the extra additive $\err$ term, 
we conclude that the above algorithm achieves 
$(2, O(\frac{1}{\epsilon} \cdot \log^{2.5}n \cdot \log\frac1\sigma))$-approximation
with probability $1-\sigma$.

We now bound the algorithm's runtime. 
First, not counting the runtime associated with \PSum updates, the rest
of the algorithm is easily seen to take only $O(n + m)$ time --- specifically,
the total work spent in Line 1
\elaine{hard coded ref} 
is at most $O(n)$ due to the same reason as Charika's non-DP, linear-time algorithm.
Due to the runtime of \PSum stated in Theorem~\ref{thm:psum}, 
to prove the statement about the runtime, 
it suffices to show that 
the number of $\PSum$ updates is upper 
bounded by $4m$ with probability $\exp(-\Omega(m))$.
Consider running the algorithm till it makes exactly $4m$ \PSum updates ---
if the algorithm ends before making $4m$ \PSum 
updates, we can simply pad it to 
exactly $4m$ number of 
$\PSum$ updates by appending filler \PSum updates at the end of the algorithm
which does not affect the outcome.
In this way, there is exactly one noise $\noise$ associated 
with each of the $4m$ \PSum updates.
Due to the choice of $\thresh$,
the probability that each $\noise$ is greater than $\thresh$ is at most $\sigma/n$.
Due to the Chernoff bound, 
the probability that there exist 
$3m$ or more noises $\noise$ that are greater  
than $\thresh$ is $\exp(-\Omega(m))$.
This means that for $m$ of these \PSum updates, the true increment 
input to the \PSum instance must be at least $1$, and thus one edge
must be consumed for each of these $m$ \PSum updates.
In other words, except with 
$\exp(-\Omega(m))$ probability, the algorithm must have ended
after having made $4m$ or fewer \PSum updates.

\elaine{now comment on correctness}

\section{Lower Bound}
\label{sec:lb}

\vspace{0.2cm}
{\noindent \textbf{Theorem \ref{thm:lb}}. \textit{Let $\alpha > 1$, $\epsilon > 0$ be arbitrary constants,
$\exp(-n^{0.49}) < \sigma < 0.000001 \cdot \min(1, \epsilon, \exp(-\epsilon))$, and $0 \le \delta \le \frac{\sigma \epsilon}{\log \frac{1}{4\sigma}}$.
Then, there exists a sufficiently small 
$\beta = \Theta\big(\frac{1}{\alpha}\sqrt{\frac{1}{\epsilon} \log \frac{1}{\sigma}}\big)$ 
such that 
there does not exist an $(\epsilon,\delta)$-DP mechanism that 
achieves $(\alpha, \beta)$-approximation 
with $1-\sigma$ probability.}}

\begin{proof}
Suppose there exists an $(\epsilon,\delta)$-DP mechanism 
denoted $M$ 
that achieves $(\alpha, \beta)$-approximation with $1-\sigma$ probability,
for the parameters stated above. Let $\beta = \frac{1}{100\alpha}\sqrt{\frac{1}{\epsilon}\log \frac{1}{4\sigma}}$. 
We will reach a contradition below. Thus, no $(\epsilon,\delta)$-DP mechanism can achieve a $(\alpha, \beta)$-approximation.

Consider a graph over $n$ vertices $V := [n]$. 
Suppose that a subset $A \subseteq V$ of size $4\alpha \beta +1$ 
of the vertices form a clique, and there are no other edges in $G$.
Therefore, the densest subgraph of $G$ is $A$, and its
density is $\frac{|A|-1}{2}$ which is $2\alpha \beta$. Note that for our choices of parameters, we always have $\alpha \beta \ge 1$.

If we run the mechanism $M$ over this graph $G$, we know that with probability 
$1-\sigma$, the true density of set of vertices output is at least $\beta$.
This means that with probability $1-\sigma$, 
at most $(4\alpha\beta)(4 \alpha \beta+1)/(2\beta) \le 16\alpha^2 \beta$ vertices are output, since the graph $G$ has $(4\alpha\beta)(4 \alpha \beta+1)/2$ total number of edges.
In other words, the expected number of vertices output  
is upper bounded by
\begin{equation}
(1-\sigma) \cdot 16\alpha^2 \beta + \sigma \cdot n \,.
\label{eqn:expectedv}
\end{equation}

We claim that there must exist a set $B$ of size $4\alpha\beta+1$ that is disjoint
from $A$ such that with probability at least $1/2$, \elaine{param}
no vertex in $B$ is output. 
Suppose that this is not the case, 
then, the expected number of 
vertices contained in the output 
is at least $s := \frac{n - (4\alpha \beta+1)}{4\alpha\beta+1}  
\cdot \frac12$.
For suitable choice of parameters as stated 
in the theorem statement, 
$s$ would be greater than 
Equation~(\ref{eqn:expectedv}) for sufficiently large $n$, which leads to a contradiction.
\elaine{this imposes some constraint on choice of delta}

Now, consider another graph $G'$ in which the set
$B$ of vertices form a clique and there are no other edges in $G'$. 
$G'$ can be obtained by making $2 \cdot (4 \alpha \beta+1)(4 \alpha \beta)/2 \le 32 \alpha^2 \beta^2$
 edge modifications
starting from $G$.

We now use the group privacy theorem to derive a lower bound on the probability that mechanism $M$ does not output any of the vertices in $B$ if we run it on $G'$.

\begin{theorem}[Group Privacy]
Let $M$ be a $(\epsilon, \delta)$-DP mechanism, and $G,G'$ be two datasets. Then, for any subset $U$ of the output space,
\[
\Pr[\Alg(G) \in U] \leq 
e^{k \epsilon} \cdot \Pr[\Alg(G') \in U]  + \delta \cdot k \cdot e^{k \epsilon} \,,
\]
where $k$ is the hamming distance between $G$ and $G'$.
\end{theorem}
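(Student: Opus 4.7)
The plan is to prove the group privacy bound by induction on $k$, the Hamming distance between $G$ and $G'$. The base case $k=1$ is exactly the definition of $(\epsilon,\delta)$-DP: by assumption, $\Pr[M(G) \in U] \le e^\epsilon \Pr[M(G') \in U] + \delta$, which matches the claimed bound since $e^\epsilon \ge 1$. (The case $k=0$ is trivial since $G = G'$ forces $\Pr[M(G)\in U] = \Pr[M(G')\in U]$.)

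For the inductive step, assume the bound holds for Hamming distance $k-1$. Given $G$ and $G'$ at Hamming distance $k$, construct an intermediate chain $G = G_0, G_1, \ldots, G_k = G'$ so that consecutive datasets $G_i$ and $G_{i+1}$ are neighboring (this is always possible since $G$ and $G'$ differ in exactly $k$ entries, so we may flip them one at a time). Then $G_1$ and $G'$ are at Hamming distance $k-1$. Apply $(\epsilon,\delta)$-DP once to the pair $(G_0, G_1)$, and then apply the inductive hypothesis to the pair $(G_1, G_k)$:
\[
\Pr[M(G)\in U] \;\le\; e^\epsilon \Pr[M(G_1)\in U] + \delta
\;\le\; e^\epsilon\bigl( e^{(k-1)\epsilon}\Pr[M(G')\in U] + (k-1)\delta \, e^{(k-1)\epsilon}\bigr) + \delta.
\]

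The last step is to simplify the additive error into the claimed form. Distributing $e^\epsilon$ gives $e^{k\epsilon}\Pr[M(G')\in U] + (k-1)\delta\, e^{k\epsilon} + \delta$, and since $\epsilon \ge 0$ we have $\delta \le \delta\, e^{k\epsilon}$, so the additive term is at most $k\delta\, e^{k\epsilon}$, yielding exactly the claim. The only subtlety, and the main thing to be careful about, is this bookkeeping of the $\delta$ contributions through the chain: a naive write-out produces a sum $\sum_{i=0}^{k-1}\delta e^{i\epsilon}$, and one must use monotonicity of $e^{i\epsilon}$ in $i$ (valid because $\epsilon \ge 0$) to collapse it into the clean $k\delta e^{k\epsilon}$ form stated in the theorem. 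No other step is substantive — the argument is purely a telescoping application of the DP definition along a neighboring chain.
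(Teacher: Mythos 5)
Your proof is correct. The paper does not actually prove this statement --- it states the Group Privacy theorem as a known fact from the differential privacy literature and immediately applies it inside the proof of the lower bound --- and your telescoping induction along a chain of neighboring datasets, with the bookkeeping $\sum_{i=0}^{k-1}\delta e^{i\epsilon} \le k\delta e^{k\epsilon}$, is exactly the standard argument that establishes it.
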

By $(\epsilon,\delta)$-DP 
and the group privacy theorem, 
it holds that if we run the mechanism $M$ on the graph $G'$,
the probability that
none of $B$ is selected is at least
\begin{equation}
\begin{split}
\bigg(\frac{1}{2} - 32\alpha^2 \beta^2 \delta \exp(32\alpha^2 \beta^2 \epsilon)\bigg)   \cdot \exp(-\epsilon \cdot 32 \alpha^2 \beta^2) &= \frac{1}{2} \exp(-32\epsilon \cdot \alpha^2 \beta^2) - 32\alpha^2 \beta^2 \delta \nonumber \\
& > 2 \sigma - \sigma = \sigma
 \,,
 \end{split}
\end{equation}
where the inequality above is because $\beta \le \frac{1}{100\alpha}\sqrt{\frac{1}{\epsilon}\log \frac{1}{4\sigma}}$, and $\delta \le \frac{\sigma\epsilon}{\log \frac{1}{4\sigma}}$.
Thus, with larger than $\sigma$ probability, the true density of the set
of vertices output is $0$.
Therefore, it is impossible that the mechanism $M$ gives
$(\alpha, \beta)$-approximation with $1-\sigma$ probability
(over any graph).  
\end{proof}

\bibliographystyle{apalike} 
\newpage



\end{document}